\newcommand{\V}{\mathcal{V}}
\newcommand{\E}{\mathcal{E}}
\tikzset{>={Latex[width=2mm,length=2mm]}}
\tikzstyle{vertex}=[circle, draw]
\newtheorem{theorem}{Theorem}
\newtheorem{lemma}{Lemma}
\tikzstyle{stuff_fill}=[vertex, style=green, fill=black!10]
\begin{document}

\title{Shortest Path and Maximum Flow Problems Under Service Function Chaining Constraints}

\author{Gamal Sallam, Gagan R. Gupta, Bin Li, and Bo Ji \thanks{This work was supported in part by the NSF under Grants CNS-1651947 and CNS-1717108. 
		
		Gamal Sallam (tug43066@temple.edu) and Bo Ji (boji@temple.edu)  are with the Department of
		Computer and Information Sciences, Temple University, Philadelphia,	PA,  Gagan R. Gupta (gagan.gupta@iitdalumni.com) is	with AT\&T Labs, and Bin Li (binli@uri.edu) is with the Department of
		Electrical, Computer and Biomedical Engineering, University of Rhode Island, Kingston, Rhode Island.
}}

% make the title area
\maketitle

\begin{abstract} 
With the advent of Network Function Virtualization (NFV), Physical Network Functions (PNFs) are gradually being replaced by Virtual Network Functions (VNFs) that are hosted on general purpose servers. Depending on the call flows for specific services, the packets need to pass through an ordered set of network functions (physical or virtual) called Service Function Chains (SFC) before reaching the destination. Conceivably for the next few years during this transition, these networks would have a mix of PNFs and VNFs, which brings an interesting mix of network problems that are studied in this paper: (1) How to find an SFC-constrained shortest path between any pair of nodes? (2) What is the achievable SFC-constrained maximum flow? (3) How to place the VNFs such that the cost (the number of nodes to be virtualized) is minimized, while the maximum flow of the original network can still be achieved even under the SFC constraint? In this work, we will try to address such emerging questions. First, for the SFC-constrained shortest path problem, we propose a transformation of the network graph to minimize the computational complexity of subsequent applications of any shortest path algorithm. Second, we formulate the SFC-constrained maximum flow problem as a fractional multicommodity flow problem, and develop a combinatorial algorithm for a special case of practical interest. Third, we prove that the VNFs placement problem is NP-hard and present an alternative Integer Linear Programming (ILP) formulation. Finally, we conduct simulations to elucidate our theoretical results.
\end{abstract}

\IEEEpeerreviewmaketitle

\section{Introduction}

Major service providers in the communications industry across the globe are transforming their technology, operations and business models to harness the benefits of Network Function Virtualization (NFV). Stated simply, NFV involves replacing the  Physical Network Functions (PNFs) running on commodity hardware with software modules called Virtual Network Functions (VNFs) that are hosted on general purpose servers \cite{Introduction2012}. Each server can host multiple VNFs, while each network function can have multiple instances running at different physical locations. Hybrid networks comprising the VNFs and legacy PNFs will be the norm for the next decade \cite{Sherry2012, Amdocs_whitepaper}. Even in a hybrid network, a lot of benefits can be harnessed. For instance, flows can be processed by different functions at one node and functions can be flexibly added and removed. This opens up an interesting mix of network problems that are studied in this paper.

Service function chaining (SFC) is the ability to specify a set of network functions as well as their execution order for each flow \cite{Bhamare2016}. 
An example is provided in Fig. \ref{fig:scenario} in which different network functions are supported at different locations. Assume that we have a flow from $v_1$ to $v_6$, with the following SFC constraint: ($v_1$, Firewall (FW), wide area network (WAN) optimizer, $v_6$). Different paths that satisfy the SFC constraint are available for this flow such as $(v_1,  v_2,  v_4, v_3,  v_6)$, or $(v_1,  v_4,  v_5,  v_6)$. Which of these paths to choose depends on the load at each instance and the total congestion along each path. Moreover, satisfying the SFC constraint may reduce the maximum flow that can be sent from $v_1$ to $v_6$, because some paths (e.g., path $(v_1,  v_2,  v_3,  v_6)$) do not satisfy the SFC constraint. To achieve the original maximum flow, the decision of where to place the network functions should be made carefully to ensure that any fraction of the maximum flow passes through the required network functions.

\begin{figure}
	\centering
	
	\begin{tikzpicture}[transform shape]
	\node[align=left] at (6.2,2) {FW: Firewall\\ DPI: Deep packet inspection\\ WAN: Wide area network};
	
	\node[vertex](v_1) at (0, 0) {$ v_1 $};
	
	\node[vertex, label=above:{\{FW, DPI\}}](v_2) at (2, 1) {$ v_2 $};
	\node[vertex](v_3) at (4, 1){$ v_3 $};
	\node[vertex ](v_4) at (2, -1){$ v_4 $};
	\node[align=left] at (1.5,-1.64) {\{Proxy,  WAN optimizer\}};
	\node[vertex](v_5) at (4, -1) {$ v_5 $};
	\node[align=left] at (5.5,-1.64) {\{FW, WAN optimizer\}};
	
	\node[vertex](v_6) at (6, 0) {$ v_6 $};
	
	\begin{scope}[every path/.style={-}, every node/.style={inner sep=1pt}]
	\path (v_1) edge [ anchor= south] node {} (v_2);
	\path (v_1) edge [ anchor= south] node {} (v_4);
	\path (v_2) edge [ anchor= east] node {} (v_3);
	\path (v_2) edge [ anchor= south] node {} (v_4);
	\path (v_2) edge [ anchor= south] node {} (v_5);
	\path (v_4) edge [ anchor= south] node {} (v_5);
	\path (v_3) edge [ anchor= south] node {} (v_5);
	\path (v_3) edge [ anchor= south] node {} (v_6);
	\path (v_5) edge [ anchor= south] node {} (v_6);
	\path (v_4) edge [ anchor= south] node {} (v_3);
	
	\end{scope} 
	\end{tikzpicture}
	\caption{A network with network functions at different locations.}
	\label{fig:scenario}
\end{figure}
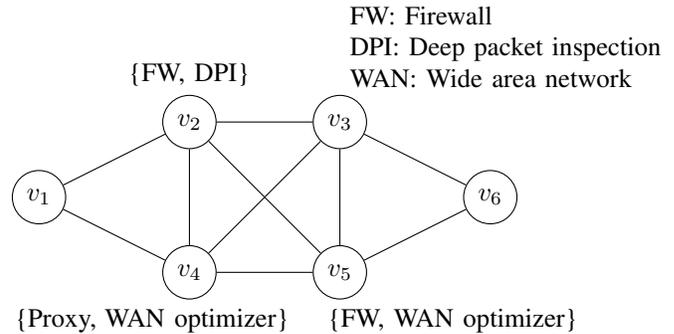

The first problem we consider in this work is, how to efficiently compute a shortest path that satisfies a given SFC constraint? Clearly, classic shortest path algorithms, such as Dijkstra's algorithm, may give a path that no longer satisfies the SFC constraint. To that end, we propose an algorithm for computing an SFC-constrained shortest path based on transforming the network graph $G$ into a new graph $\bar{G}$. Then, any shortest path found for the new graph $\bar{G}$ can be mapped to an SFC-constrained shortest path over the original graph $G$. Further, we develop a pruning algorithm that can greatly reduce the size of $\bar{G}$. As long as the network topology and SFCs remain the same, $\bar{G}$ can be repeatedly used to compute SFC-constrained shortest paths as the network costs change during the course of time. In some cases, the order of some of the network functions can be transposed, and we model that by allowing a set of valid SFCs. Moreover, it is worth noting that our proposed shortest path algorithm can also be integrated into a unified throughput-optimal routing framework \cite{OptimalControl} to achieve throughput optimality for unicast flows with SFC constraints.

Then, we consider another classic problem, the maximum flow problem, again under the SFC constraint. We call this problem the SFC-constrained maximum flow (SFC-MF) problem. The objective is to find the maximum feasible flow from a source to a destination that satisfies a given SFC constraint. We formulate the SFC-MF problem as a fractional multicommodity flow problem, which can be solved using a Linear Programming (LP) solver or approximation algorithms \cite{Cao2014}. An interesting use case is when a service provider needs to virtualize a particular network function in its network during an early stage of NFV deployment. We consider the problem of computing the maximum flow with the constraint that all packets must pass through this new VNF. We propose an elegant combinatorial algorithm for this case based on the Ford-Fulkerson algorithm \cite{Ford}.

Note that the value of the SFC-MF is not necessarily equal to that of the original maximum flow, which apparently depends on the placement of the network functions. Hence, an important question is how to place a set of VNF instances such that the original maximum flow (without the SFC constraints) can still be achieved, while the placement cost is minimized. To minimize the total operational expenses of adding commodity servers in the network to support VNFs, we aim to minimize the number of network nodes where these functions will be placed.  We first prove that this problem is NP-hard based on a reduction from the classic set-cover problem. Then, we present an alternative Integer Linear Programming (ILP) formulation that is shown to solve a large instance (e.g., a network with 100 nodes) in a few minutes. We observe via simulations that for random graphs, the maximum flow can be achieved by placing the VNFs on a small number of nodes even when the graph is large. This indicates that the operators may be able to introduce VNFs in their networks at a low starting cost without impacting the capacity, i.e., the amount of flow that can be sent.

The rest of the paper is organized as follows. In Section \ref{sec:Related_Work}, we position our paper in comparison to prior art. In Section \ref{sec:system_model}, we introduce the system model.  We investigate the SFC-constrained shortest path problem in Section \ref{sec:shortest_path}. Then, in Section \ref{sec:SFC-MF}, we describe the SFC-constrained maximum flow and present a combinatorial solution for a special case  of practical interest.  In Section \ref{sec:VSF_placement}, we focus on the problem of VNFs placement. Finally, we present the simulation results in Section \ref{sec:numerical_analysis} and conclude the paper in Section \ref{sec:conclusion}.

\section{Related Work}
\label{sec:Related_Work}
\textbf{SFC-constrained Shortest Path.} The problem of SFC-constrained shortest path has been considered in \cite{Dwaraki2016,Cao2014}. Specifically, in \cite{Dwaraki2016}, a layered graph with $r+1$ layers is constructed where each layer is a replication of the original graph and $r$ is the number of network functions in a given SFC constraint. Then, an SFC-constrained shortest path over the original network graph can be found by applying any shortest path algorithm over the layered graph. However, the constructed layered graph has a large size. Specifically, the number of nodes and edges increases by at least a factor of $r+1$ compared to the original graph. In \cite{Cao2014}, another approach is proposed, which requires the computation of the shortest paths between all node-pairs in order to find a specific SFC-constrained shortest path. Restricting a path to be a simple path with multiple must-stop nodes, without any order requirements, is NP-Complete and in \cite{Vardhan2009}, a heuristic is proposed for that. In our proposed approach, we will construct a new graph that has a small size, needs to be constructed only once, and does not require any further changes after each shortest path computation.

\textbf{SFC-constrained Maximum Flow.} The work of \cite{Charikar2015} formulates an SFC-constrained Maximum Flow problem as a multicommodity maximum flow problem. Hence, their problem is an LP and can be solved using any LP solver or can be approximated using a multiplicative weight update method \cite{Arora2012}. Another approximation algorithm is presented in \cite{Cao2014} to decide if a given set of flows with SFC constraints can be supported. In contrast, in this paper we are interested in combinatorial algorithms that give exact solutions.

\textbf{Placement of VNFs}. The problem of VNFs placement with different objectives has been studied in the past few years. In \cite{Sang}, the authors focus on the placement of VNF instances that satisfies the demands of flows with given routes. A similar problem is investigated in \cite{poularakis2017one} but for gradually upgrading some nodes to have Software Defined Networking (SDN) capabilities. Specifically, the authors consider how to select a set of nodes to upgrade to SDN such that the flow that passes through at least one SDN node is maximized, where each flow has a predetermined path. The work of \cite{Feng2017} considers a joint problem of VNFs placement and flow routing to minimize the total amount of resources used by the flows, while in \cite{Feng2016}, the objective is to ensure that the underlying network is stable. In \cite{Bhamare2017}, the authors consider the problem of VNFs placement for minimizing both the end user delay and deployment cost. Considering a similar placement problem, the work of \cite{Li2016a} aims to maximize the number of admitted requests and provides a soft real-time guarantee for each admitted request. In \cite{Ma2017}, the authors consider how to minimize the overall traffic volume for a given flow given that the traffic volume may change (increase or decrease) after being processed by some network functions. Different from all prior works, in the new placement problem we will consider, the objective is to minimize the number of nodes that need to be virtualized such that, the original maximum flow can be achieved under a given SFC constraint.

\section{System model}
\label{sec:system_model}
We consider a network that is represented by a graph $G=(\V, \E)$, where $\V$ denotes the set of vertices and $\E$ denotes the set of edges. We will first consider directed graph for the shortest path problem and placement problem, and then, consider undirected graph for the maximum flow problem. We use $\phi_i$ to denote network function $i$, and use $\Phi_{v_k}$ to denote the set of network functions supported at node $v_k$. The network functions are either physical devices attached to network nodes or virtual network functions at servers or datacenters attached to network nodes. In the case of dataceners, we assume that the capacity can be enlarged as needed. The same network function may have multiple instances at different vertices. We consider a flow that is required to satisfy an SFC constraint represented as: $(v_s, \phi_1, ..., \phi_r, v_d)$, where $v_s$ and $v_d$ are the source and destination, respectively, and $(\phi_1, ..., \phi_r)$ denotes a sequence of network functions by which all the packets of the flow need to be processed before reaching $v_d$.  

A path $p$ is denoted by $p=(e_1, ...,e_{|p|})$, where $e_i$ is the $i^{th}$ hop edge of path $p$ and $|p|$ is the length of path $p$. Sometimes, we refer to a path by the nodes along the path, i.e., $p=(v_1, v_2, \dots, v_n)$, where $e_i=(v_i, v_{i+1})$. By slightly abusing the notations, we also use $e \in p$ to denote an edge of path $p$. A path is called \emph{admissible} for a flow if it satisfies the requirements of the flow specified by the SFC constraint. To ensure the order imposed by an SFC, a packet may need to visit the same vertex more than once before it reaches the destination.

\section{SFC-constrained Shortest Path}

\label{sec:shortest_path}
In this section, we focus on the shortest path problem with a given SFC constraint. Specifically, among all the admissible paths we want to find the one that has the minimum cost, which could be the least congestion level (i.e., smallest delay). Note that the classic shortest path algorithms cannot be directly applied to produce a shortest path, because the generated path may not satisfy the given SFC constraint. For instance, consider the network presented in Fig. \ref{fig:networkGraph1}. We have two network functions, $\phi_1$ at nodes $ v_2 $ and $ v_4 $, and $\phi_2$ at nodes $ v_2 $ and $ v_3 $. Consider an $\text{SFC} = ( v_1, \phi_1, \phi_2, v_5)$. Any conventional shortest path algorithm will return a path $(v_1,  v_3,  v_5)$, which does not satisfy the imposed SFC constraint, i.e., the flow needs to be processed by $\phi_1$ first before it is processed by $\phi_2$. Another solution is to find a shortest path from the source $ v_1 $ to the first network function $\phi_1$, which is $(v_1,  v_2)$, then from $ v_2 $ we find the shortest path to $\phi_2$, which is $ v_2 $ itself, then from $ v_2 $ to the destination. This solution results in a path of $(v_1,  v_2,  v_5)$, which has a cost of 8. However, it can be verified that path $(v_1,  v_3,  v_4,  v_3,  v_5)$ satisfies the SFC constraint and has the minimum cost of 6. From this simple example, we can observe that it is non-trivial to find a path with minimum cost while satisfying the given SFC constraint. 

We propose a novel solution by cleverly transforming the network graph, $G$, to a new graph, $\bar{G}$, in which the shortest path will be computed and mapped to a path in $G$. We describe this algorithm in Algorithm \ref{alg:SP_SFC} and explain its operations in detail as follows.
\begin{figure}
	\centering
	
	\begin{tikzpicture}[transform shape]
	
	\node[vertex](v_1) at (-1, 1) {$ v_1 $};

	\node[vertex, label=below:{ $\Phi_{v_2}=\{\phi_1, \phi_2\}$}](v_2) at (2, 1) {$ v_2 $};
	\node[vertex, label=below:{$\Phi_{v_3}=\{\phi_2\}$}](v_3) at (2, -1){$ v_3 $};
	\node[vertex, label=below:{$\Phi_{v_4}=\{\phi_1\}$}](v_4) at (-1, -1){$ v_4 $};
	\node[vertex](v_5) at (4, 0) {$ v_5 $};
	
	\begin{scope}[every path/.style={->}, every node/.style={inner sep=1pt}]
	\path (v_1) edge [ anchor= south] node {$3$} (v_2);
	\path (v_1) edge [ anchor= south] node {$1$} (v_3);
	\path (v_1) edge [ anchor= east] node {$5$} (v_4);
	\path (v_2) edge [ anchor= south] node {$5$} (v_5);
	\path (v_3) edge [ anchor= south] node {$1$} (v_5);
	\path (v_4) edge [ anchor= south] node {$1$} (v_3);
	\path (v_3) edge [ anchor= south, bend left] node {$3$} (v_4);
	
	\end{scope} 
	\end{tikzpicture}
	\caption{Graph representation of the network, where each node has zero or more network functions, shown below each node.}
	\label{fig:networkGraph1}
\end{figure}
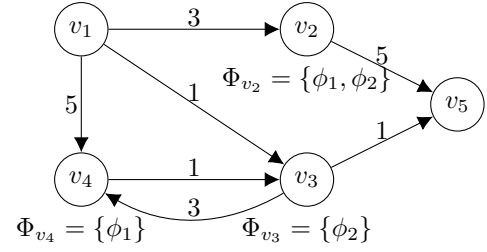
\subsection{Constructing ${\bar{G}}$ }
\label{subsection:G_bar_construction}
\begin{algorithm}[t]
	\caption{ SFC-constrained shortest path algorithm}
	\label{alg:SP_SFC}
	\begin{algorithmic}[1]
		\State \textbf{Require}: $G= (\V, \E)$,  and $\text{SFC} = (v_s, \phi_1, \dots, \phi_r, v_d)$.
		\State \textbf{Output}: A shortest path from $v_s$ to $v_d$ that satisfies the SFC constraint.
		\State \textbf{// Initial Construction of ${\bar{G}}=(\bar{\V}, \bar{\E})$}
		\For {each $v_k$ in $\V$}
		\State \parbox[t]{\dimexpr\linewidth-\algorithmicindent} {Create $r+1$ virtual vertices $v_k^0, ..., v_k^r $ and add \newline them to $\bar{\V}$.\strut}
		\EndFor
		\For {each edge $(v_l, v_k)$ in $\E$}
		\For {each $v_l^i$ in $\bar{\V}_{v_l}$}
		\State {Pick $v_k^{j^*}$ in $\bar{\V}_{v_k}$ such that}	
		\begin{equation}					
		j^* = \max \{j: \{\phi_{i+1}, \dots, \phi_{j}\} \subseteq \Phi_{v_k}, j \geq i \}
		\label{eq:newEdges}
		\end{equation}
		\State {Add edge $(v_l^i, v_k^{j^*})$ to $\bar{\E}$ .}
		\EndFor
		\EndFor
		\State \textbf{ // Pruning ${\bar{G}}$}
		\While{there is a vertex $v$ in $\bar{\V}$ with no incoming edge(s) or outgoing edge(s), except the source and destination}
		\State {remove $v$ and the edges that connect to $v$.}
		\EndWhile
		\Statex
		\State \textbf{ // Computing a shortest path }
		\State {Define $t$ as the first network functions of the SFC that are available at the source}
		\State {Map the source $v_s \in \V$ to $v_s^t\in \bar{\V}$}
		\State {Map the destination $v_d \in \V$ to $v_d^r \in \bar{\V}$}
		\State {Use any shortest path algorithm (e.g., Dijkstra's algorithm) to compute a shortest path ${\bar{p}}$ from $v_s^t$ to $v_d^r$ in $\bar{{G}}$.}
		\State {Map ${\bar{p}}=(v_s^t,  \dots,  v_k^{t+1},  \dots,  v_d^r)$ to path ${p} = (v_s,  \dots,  v_k,  \dots,  v_d)$ in ${G}$}.
	\end{algorithmic}
\end{algorithm}

\subsubsection{Initial ${\bar{G}}$}

Given a network represented by a graph $G=(\V,\E)$ and and SFC = $(v_s, \phi_1, ..., \phi_r, v_d)$, we will construct a new graph ${\bar{G}}= (\bar{\V}, \bar{\E})$.  The vertices $\bar{\V}$ are as follows. For each $v_k$ in $\V$, we create $r+1$ virtual vertices $v_k^i$ (some of which will be removed later), where $i=0, ..., r$. The idea is to ensure that each virtual vertex $v_k^i$ has the following reachability property: it is reachable from the source only if the path from the source to $v_k^i$ satisfies the partial service function chain $(\phi_1, ..., \phi_i)$. To do that, the edges in $\bar{\E}$ are established as follows. First, we use $\bar{\V}_{v_k}$ to denote the set of vertices in ${\bar{G}}$ that corresponds to vertex $v_k$ in $G$. We construct edges in ${\bar{G}}$ as follows. For each edge $(v_l, v_k)$ in $G$, an edge is established between each pair of $v_l^i \in \bar{\V}_{v_l}$ and $v_k^{j^*} \in \bar{\V}_{v_k}$, where $j^* \geq i$ is the highest index for which the set $(\phi_{i+1}, \dots, \phi_{j^*})$ is a subset of $\Phi_{v_k}$, as in Eq. \eqref{eq:newEdges}. That means the selected vertex $v_k^{j^*}$ is either a vertex that represents the same network functions as $v_l^i$, i.e., $j^* = i$, or $v_k^{j^*}$ represents more network functions than $v_l^i$, but this difference in the network functions is supported by node $v_k$. For instance, $v_l^1$ can be connected to $v_k^2$ if the network function $\phi_2$ is supported by node $v_k$, otherwise, we connect it to vertex $v_k^1$. The cost of edge $(v_l^i, v_k^{j^*})$ is set to the same cost of edge $(v_l, v_k)$. 

\subsubsection{Pruned ${\bar{G}}$}
Note that some vertices in ${\bar{G}}$ only have outgoing edges and do not have any incoming edges. Such vertices, except for the source, can be removed from ${\bar{G}}$ because they will not contribute to any path from the source. For a similar reason, we remove all the vertices that only have incoming edges and do not have any outgoing edges. This procedure will create new vertices that do not have any incoming edges or that do not have any outgoing edges. Hence, we repeat this procedure until no such vertices exist.

\subsubsection{Computing a Shortest Path}
\label{sec:shortest_path_approach}

After constructing graph ${\bar{G}}$, we find the shortest path from a source $v_s$ in $\V$  to a destination $v_d$ in $\V$ with an SFC constraint of length $r$ as follows. Assume that the first $t$ network functions of the SFC constraint are available at the source. We run any shortest path algorithm to find a path from $v_s^t$ to $v_d^r$ in ${\bar{G}}$. The obtained path is mapped to a path in ${G}$ by replacing each vertex with its corresponding vertex in ${G}$. In some cases, the SFC constraint has some flexibility, which means that some functions can be implemented in any order. So, we utilize that by constructing a set of valid SFCs and for each SFC, we construct a graph $\bar{G}$. Then, we find the shortest path in each constructed graph $\bar{G}$ and pick the shortest path among all of them. However, if we have a fully flexible SFC constraint, then the number of valid SFCs for $r$ functions will be $r!$.

\subsection{Detailed Example}

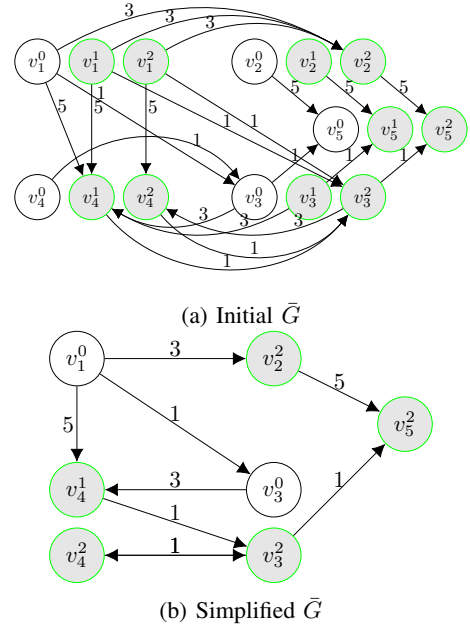
\begin{figure}
	\centering
	\begin{subfigure}{0.7\linewidth}
		\centering

		\resizebox{\linewidth}{!}{
			\begin{tikzpicture}[transform shape]
			\node[vertex](v_1^0) at (-3, 1.5) {$v_1^0$};
			\node[stuff_fill](v_1^1) at (-2, 1.5) {\textcolor{black}{$ v_1^1 $}};
			\node[stuff_fill](v_1^2) at (-1, 1.5) {\textcolor{black}{$ v_1^2 $}};
			\node[vertex](v_2^0) at (1, 1.5) {$v_2^0$};
			\node[stuff_fill](v_2^1) at (2, 1.5) {\textcolor{black}{$ v_2^1 $}};
			\node[stuff_fill](v_2^2) at (3, 1.5) {\textcolor{black}{$ v_2^2 $}};
			\node[vertex](v_4^0) at (-3, -1) {$ v_4^0 $};
			\node[stuff_fill](v_4^1) at (-2, -1) {\textcolor{black}{$ v_4^1 $}};
			\node[stuff_fill](v_4^2) at (-1, -1) {\textcolor{black}{$ v_4^2 $}};
			\node[vertex](v_3^0) at (1, -1) {$ v_3^0 $};
			\node[stuff_fill](v_3^1) at (2, -1) {\textcolor{black}{$ v_3^1 $}};
			\node[stuff_fill](v_3^2) at (3, -1) {\textcolor{black}{$ v_3^2 $}};
			\node[vertex](v_5^0) at (2.5, 0.25) {$ v_5^0 $};
			\node[stuff_fill](v_5^1) at (3.5, 0.25) {\textcolor{black}{$ v_5^1 $}};
			\node[stuff_fill](v_5^2) at (4.5, 0.25) {\textcolor{black}{$ v_5^2 $}};
			
			\begin{scope}[every path/.style={->}, every node/.style={inner sep=1pt}]
			\path (v_1^0) edge [ anchor= west, near start] node {$5$} (v_4^1);
			\path (v_1^1) edge [ anchor= west, , near start] node {$5$} (v_4^1);
			\path (v_1^2) edge [ anchor= west, near start] node {$5$} (v_4^2);
			\path (v_1^0) edge [ anchor= south, bend left, near start] node {$3$} (v_2^2);
			\path (v_1^1) edge [ anchor= south, bend left, near start] node {$3$} (v_2^2);
			\path (v_1^2) edge [ anchor= south, bend left, near start] node {$3$} (v_2^2);
			\path (v_4^0) edge [ anchor= south, bend left=50, near end] node {$1$} (v_3^0);
			\path (v_4^1) edge [ anchor= south, bend right=50] node {$1$} (v_3^2);
			\path (v_4^2) edge [ anchor= south, bend right=50] node {$1$} (v_3^2);
			\path (v_1^0) edge [ anchor= south, near start] node {$1$} (v_3^0);
			\path (v_1^1) edge [ anchor= south] node {$1$} (v_3^2);
			\path (v_1^2) edge [ anchor= south] node {$1$} (v_3^2);
			\path (v_3^0) edge [ anchor= south] node {$1$} (v_5^0);
			\path (v_3^1) edge [ anchor= south] node {$1$} (v_5^1);
			\path (v_3^2) edge [ anchor= south] node {$1$} (v_5^2);
			\path (v_2^2) edge [ anchor= south] node {$5$} (v_5^2);
			\path (v_2^0) edge [ anchor= south] node {$5$} (v_5^0);
			\path (v_2^1) edge [ anchor= south] node {$5$} (v_5^1);
			\path (v_3^0) edge [ anchor= south , bend left, near start] node {$3$} (v_4^1);
			\path (v_3^1) edge [ anchor= south, bend left, near start] node {$3$} (v_4^1);
			\path (v_3^2) edge [ anchor= south, bend left, near start] node {$3$} (v_4^2);
			
			\end{scope} 
			\end{tikzpicture}
		}
		\caption{Initial $\bar{G}$}
		\label{fig:newNetworkGraph1}
	\end{subfigure}
	
	\begin{subfigure}{0.6\linewidth}
		\centering
		\resizebox{\linewidth}{!}{
			\begin{tikzpicture}[transform shape]
			\node[vertex](v_1^0) at (1, 1) {$ v_1^0 $};
			\node[stuff_fill](v_2^2) at (4, 1) {\textcolor{black}{$ v_2^2 $}};
			\node[stuff_fill](v_4^1) at (1, -1) {\textcolor{black}{$ v_4^1 $}};
			\node[stuff_fill](v_4^2) at (1, -2) {\textcolor{black}{$ v_4^2 $}};
			\node[vertex](v_3^0) at (4, -1) {$ v_3^0 $};
			\node[stuff_fill](v_3^2) at (4, -2) {\textcolor{black}{$ v_3^2 $}};
			\node[stuff_fill](v_5^2) at (6, 0) {\textcolor{black}{$ v_5^2 $}};
			
			\begin{scope}[every path/.style={->}, every node/.style={inner sep=1pt}]
			\path (v_1^0) edge [ anchor= east] node {$5$} (v_4^1);
			\path (v_1^0) edge [ anchor= south] node {$1$} (v_3^0);
			\path (v_3^0) edge [ anchor= south] node {$3$} (v_4^1);
			\path (v_1^0) edge [ anchor= south] node {$3$} (v_2^2);
			\path (v_4^1) edge [ anchor= south] node {$1$} (v_3^2);
			\path (v_3^2) edge [ anchor= south] node {$1$} (v_5^2);
			\path (v_3^2) edge [ anchor= south] node {$1$} (v_4^2);
			\path (v_4^2) edge [ anchor= south] node {$1$} (v_3^2);
			\path (v_2^2) edge [ anchor= south] node {$5$} (v_5^2);
			\end{scope}
			\end{tikzpicture}
		}
		\caption{Simplified $\bar{G}$}
		\label{fig:finalNetworkGraph1}
	\end{subfigure}
	\caption{Constructing $\bar{G}$ by utilizing a set of virtual vertices}
	\label{fig:G_construction}
\end{figure}

In the following, we present a detailed example of constructing ${\bar{G}}$ for the network in Fig. \ref{fig:networkGraph1}. In order to find a path from $ v_1 $ to $ v_5 $ with an $\text{SFC} = (v_1, \phi_1, \phi_2, v_5)$, we start by transforming the graph ${G}$ into a new graph ${\bar{G}}$ as in Fig. (\ref{fig:newNetworkGraph1}). For each vertex $v_k$ in Fig. \ref{fig:networkGraph1}, we create three  virtual vertices $v_k^0, v_k^1,$ and $v_k^2$ because the number of network functions specified by the SFC is two. The edges in ${\bar{G}}$ are constructed as follows. For $v_1^0$, we connect it to $v_2^2 \in \bar{\V}_{v_2}$ as all the functions are available in $v_2$. However, we connect $v_1^0$ to $v_4^1$ as only function $\phi_1$ is available at node $v_4$. The edge cost of $(v_1^0, v_2^2)$ and $(v_1^0, v_4^1)$ is set to that of edge $(v_1, v_2)$ and $(v_1, v_4)$ in ${G}$, respectively. We repeat the same procedure for each vertex and obtain initial ${\bar{G}}$. Then, we obtain pruned ${\bar{G}}$ by repeatedly removing the vertices that do not have any incoming edges (except for the source) or that do not have any outgoing edges (except for the destination). The final graph is shown in Fig. (\ref{fig:finalNetworkGraph1}). The shortest path from $v_1$ to $v_5$ is $(v_1^0,   v_3^0,   v_4^1,   v_3^2,   v_5^2)$ (see Fig. (\ref{fig:finalNetworkGraph1})). This path is mapped to $(v_1,   v_3,   v_4,   v_3,   v_5)$ in the original graph ${G}$.

\subsection{Algorithm Analysis}
We start by proving the correctness of the proposed algorithm in Theorem \ref{Theorem:SP}. Then, we show the performance of the pruning step in Lemma \ref{Lemma:pruning}.
\begin{theorem}
	A shortest path to the destination vertex $v_d^r$ in $\bar{G}$ is a shortest path to the destination vertex $v_d$ in $G$ that satisfies the given SFC constraint.
	\label{Theorem:SP}
\end{theorem}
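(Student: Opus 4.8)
The plan is to establish a bijective, cost-preserving correspondence between admissible paths in $G$ (those satisfying the SFC constraint $(v_s,\phi_1,\dots,\phi_r,v_d)$) and paths from $v_s^t$ to $v_d^r$ in $\bar{G}$. Once such a correspondence is shown to preserve path cost, it follows that minimizing cost in $\bar{G}$ is equivalent to minimizing cost over admissible paths in $G$, which is exactly the claim. The pruning step only removes vertices that cannot lie on any $v_s^t$--$v_d^r$ path, so it does not affect the set of such paths and can be ignored for the correctness argument; I would state this explicitly and then work with the initial $\bar{G}$.

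The heart of the argument is an invariant on the superscript index. I would prove by induction on path length the following \emph{reachability property} (already announced in the construction): a vertex $v_k^i$ is reachable from $v_s^t$ via a path $\bar{p}$ in $\bar{G}$ if and only if the corresponding walk $p$ in $G$ (obtained by dropping superscripts) starts at $v_s$, ends at $v_k$, and processes exactly the prefix $(\phi_1,\dots,\phi_i)$ of the SFC in order along the way, with no function beyond $\phi_i$ yet applied. The forward direction uses the edge rule \eqref{eq:newEdges}: an edge $(v_l^i, v_k^{j^*})$ advances the index from $i$ to $j^*$ precisely by the maximal run of consecutive functions $\phi_{i+1},\dots,\phi_{j^*}$ that node $v_k$ supports, so the index always tracks the longest SFC-prefix processed so far. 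The reverse direction constructs, from any admissible walk in $G$, a lifted walk in $\bar{G}$ by assigning to each visited node the current prefix-length index; I must check that each lifted edge is genuinely present in $\bar{E}$, which again reduces to the maximality in \eqref{eq:newEdges}. Evaluating the base case at the source requires care: the index $t$ is defined as the number of leading SFC functions already available at $v_s$, so $v_s^t$ correctly encodes the prefix processed ``for free'' at the source, and the lift of a trivial walk lands on $v_s^t$.

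Given the invariant, the endpoints fall out immediately: a walk in $G$ is admissible exactly when it has processed the full chain $(\phi_1,\dots,\phi_r)$ upon arrival at $v_d$, i.e. exactly when its lift terminates at $v_d^r$. Thus admissible $v_s$--$v_d$ walks in $G$ correspond bijectively to $v_s^t$--$v_d^r$ walks in $\bar{G}$, and since each edge of $\bar{G}$ inherits the cost of its projecting edge in $G$, corresponding walks have equal cost. A shortest path in $\bar{G}$ therefore projects to a minimum-cost admissible walk in $G$, proving the theorem.

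The main obstacle I anticipate is the subtlety that an admissible route in $G$ is in general a \emph{walk} rather than a simple path --- a packet may revisit a vertex to respect the function order, as the running example $(v_1,v_3,v_4,v_3,v_5)$ shows. I must therefore phrase the invariant in terms of walks and verify that the index-lifting remains well defined even when a node of $G$ is visited multiple times (its several visits map to possibly different superscripted copies in $\bar{G}$, which is exactly why the $r+1$ copies are needed). A secondary point to handle cleanly is the maximality clause in \eqref{eq:newEdges}: I should argue that always taking the largest feasible $j^*$ never forecloses an optimal solution, since from a higher index one can still reach $v_d^r$ whenever one could from a lower index, so greedily crediting every supported function is without loss of optimality.
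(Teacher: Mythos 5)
Your proposal is correct and, importantly, proves more than the paper's own argument does. The paper's proof consists of Lemma~\ref{lemma:SFC} --- an induction showing that any path reaching $v_k^i$ in $\bar{G}$ projects to a walk in $G$ satisfying the partial chain $(\phi_1, \dots, \phi_i)$ --- followed by a two-line conclusion: every $v_s^t$--$v_d^r$ path in $\bar{G}$ maps to an admissible path in $G$, so the shortest of them maps to a shortest admissible path. That is exactly your \emph{forward} (soundness) direction, and the paper stops there. What the paper never establishes is your \emph{reverse} (completeness) direction: that every admissible walk in $G$ --- in particular a minimum-cost one --- lifts to a $v_s^t$--$v_d^r$ walk in $\bar{G}$ of equal cost. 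Without this, the stated optimality does not actually follow: a shortest path in $\bar{G}$ could a priori be strictly longer than the best admissible walk in $G$ if that walk had no representative in $\bar{G}$. Your lifting construction, combined with the dominance observation about the maximality clause in Eq.~\eqref{eq:newEdges} (from a higher index one can still reach $v_d^r$ along the same remaining walk whenever one can from a lower index, so the forced greedy advance never forecloses anything), is precisely what closes this gap; and since each edge $(v_l, v_k) \in \E$ induces exactly one edge out of $v_l^i$ in $\bar{\E}$, the lift is unique, so your cost-preserving bijection claim is sound. Your insistence on phrasing the invariant in terms of walks rather than simple paths is also a genuine improvement: the paper's own example $(v_1, v_3, v_4, v_3, v_5)$ revisits a vertex, yet its Lemma silently speaks of ``paths.'' In short, your route subsumes the paper's (your forward direction is their Lemma~\ref{lemma:SFC}) and supplies the completeness half that the paper's proof omits but the theorem requires.
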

We first establish the following Lemma , which will be used in the proof of Theorem \ref{Theorem:SP}.
	\begin{lemma}
		\label{lemma:SFC}
		A path to any vertex $v_k^i$ in $\bar{G}$ is also a path to $v_k$ in $G$ that guarantees satisfying the partial service function chain $(\phi_1, \dots, \phi_{i})$.
	\end{lemma}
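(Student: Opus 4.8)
The plan is to prove the statement by induction on the number of edges in a path $\bar{p}$ from the source $v_s^t$ to $v_k^i$ in $\bar{G}$, carrying as an invariant that the image path in $G$---obtained by erasing all superscripts---is a walk ending at $v_k$ that satisfies the partial chain $(\phi_1,\dots,\phi_i)$ in order. Here I read ``a walk satisfies $(\phi_1,\dots,\phi_i)$'' to mean that one can designate nodes along the walk (not necessarily distinct, and visited in non-decreasing position order) at which $\phi_1,\dots,\phi_i$ are processed, where the node assigned to $\phi_m$ must support $\phi_m$, i.e. $\phi_m$ belongs to its $\Phi$-set.

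For the base case, the only path of length $0$ ends at the source $v_s^t$, whose image is the trivial walk $(v_s)$. By the definition of $t$ in Algorithm \ref{alg:SP_SFC}, the first $t$ functions of the SFC are available at $v_s$, i.e. $\{\phi_1,\dots,\phi_t\}\subseteq\Phi_{v_s}$, so all of $\phi_1,\dots,\phi_t$ can be processed at $v_s$ and the invariant holds.

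For the inductive step, I would take a path of length $n\ge 1$ whose last edge is $(v_l^i,v_k^{j^*})$, and let $\bar{p}'$ denote the prefix ending at $v_l^i$. By the induction hypothesis, the image of $\bar{p}'$ is a walk ending at $v_l$ that satisfies $(\phi_1,\dots,\phi_i)$. By the edge-construction rule in Eq. \eqref{eq:newEdges}, the edge $(v_l^i,v_k^{j^*})$ is created only when $(v_l,v_k)\in\E$, $j^*\ge i$, and $\{\phi_{i+1},\dots,\phi_{j^*}\}\subseteq\Phi_{v_k}$. Appending the hop $(v_l,v_k)$ extends the walk to $v_k$; since every function in $\{\phi_{i+1},\dots,\phi_{j^*}\}$ is supported at $v_k$, we may process $\phi_{i+1},\dots,\phi_{j^*}$ consecutively and in order at $v_k$, and these are scheduled after all of $\phi_1,\dots,\phi_i$ because $j^*\ge i$ forbids any backward step. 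Concatenating with the valid schedule for the prefix shows the extended walk satisfies $(\phi_1,\dots,\phi_{j^*})$, which establishes the invariant for $v_k^{j^*}$ and closes the induction.

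The main obstacle I anticipate is not the combinatorial bookkeeping but pinning down the notion of ``satisfying a partial chain'' precisely enough that the order constraint is honored. The two facts doing the work are $j^*\ge i$ (progress is monotone, so no function is ever demanded out of order) and $\{\phi_{i+1},\dots,\phi_{j^*}\}\subseteq\Phi_{v_k}$ (the block of newly cleared functions all reside at the single node just appended, so they can be scheduled there in sequence). I would also explicitly check the degenerate case $j^*=i$, where $\{\phi_{i+1},\dots,\phi_i\}=\emptyset$ and the hop clears no new function, to confirm that it is handled vacuously by the same argument.
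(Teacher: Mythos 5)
Your proof is correct and follows essentially the same route as the paper's: induction along the path in $\bar{G}$, using the edge-construction rule of Eq.~\eqref{eq:newEdges} to argue that each new edge $(v_l^i, v_k^{j^*})$ appends exactly the block $\{\phi_{i+1},\dots,\phi_{j^*}\}\subseteq \Phi_{v_k}$ of newly cleared functions to an already-satisfied prefix. Your write-up is in fact somewhat tighter than the paper's---you base the induction at $v_s^t$ (matching the algorithm's actual source mapping) rather than at $v_s^0$, spell out what it means for a walk to satisfy a partial chain, and check the degenerate case $j^*=i$---but these are refinements of the same argument, not a different approach.
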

	\begin{proof}
		We will prove Lemma \ref{lemma:SFC} by induction. \newline
		Let denote a path to vertex $v_k^i$ as $(v_s^0, \dots, v_l^j, v_k^i)$. The base case is the trivial case, which is that a path to $v_s^0$ satisfies zero netwok functions. The induction hypothesis is that a path to $v_l^j$ is a path to $v_l$ in $G$  that satisfies the partial service function chain $(\phi_1, \dots, \phi_j)$.
		We want to show that a path to $v_k^i$ is also a path to $v_k$  in $G$ that satisfies the partial service function chain $(\phi_1, \dots, \phi_i)$. Based on our construction of $\bar{G}$, the edge $(v_l^j, v_k^i)$ can be established if the set $\{\phi_{j+1}, \dots, \phi_i\}$ is a subset of $\Phi_{v_k}$. Moreover, by the induction hypothesis, a path to $v_j$ satisfies network functions in $(\phi_1, \dots, \phi_j)$. As a result, a path to $v_k$ includes the network functions supported by $v_l$, which is $(\phi_1, \dots, \phi_j)$, and the network functions supported by $v_k$, which is $\{\phi_{j+1}, \dots, \phi_i\}$. That would be the chain $(\phi_1, \dots, \phi_i)$. This completes the induction step. \newline		
	\end{proof}
\begin{proof}[Proof of Theorem 1]
	Let $\mathcal{\bar{P}}$ denote the set of possible paths from $v_s^t$ to $v_d^r$ in $\bar{G}$.
	Based on Lemma \ref{lemma:SFC}, all the paths in $\mathcal{\bar{P}}$ satisfies the $r$ network functions. The set of paths $\mathcal{\bar{P}}$ can be mapped to  a set of paths $\mathcal{P}$ in $G$. We can use any shortest path algorithm to select the shortest path ${\bar{p}} \in \mathcal{\bar{P}}$, and its mapping in $\mathcal{P}$ will be the shortest path in $G$.
\end{proof}

Next, in Lemma \ref{Lemma:pruning}, we show that the pruning step will reduce the size of the new constructed graph $\bar{G}$, which leads to an efficient computation of an SFC-constrained shortest path.
\begin{lemma}
	If the probability of availability of each network function at any node is $z$, then the pruning step will remove at least $0.5z|\bar{V}|$ nodes and $0.5z|\bar{E}|$ edges of the initial $\bar{G}$.
	\label{Lemma:pruning}
\end{lemma}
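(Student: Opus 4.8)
The plan is to prove the statement as a bound \emph{in expectation} over the random placement of functions, and to do so by counting only a convenient subset of the removals: the virtual vertices that the very first sweep of the pruning \textbf{while}-loop deletes because they have no incoming edge. Every vertex deleted for having no \emph{outgoing} edge, and every vertex deleted in a later iteration, only adds to the total, so a lower bound obtained from this first sweep already suffices for the ``at least'' claim. The whole argument then reduces to applying linearity of expectation to an exact combinatorial description of which vertices $v_k^i$ start out with no incoming edge.

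The key step, and the one I expect to be the main obstacle, is the following structural fact read off by inverting the edge rule \eqref{eq:newEdges}: for $0 \le i \le r-1$ and any $v_k$ having at least one in-neighbor in $G$, the vertex $v_k^i$ has no incoming edge in the initial $\bar{G}$ if and only if $\phi_{i+1} \in \Phi_{v_k}$. Indeed, an edge terminates at $v_k^i$ only when some tail index $i'$ produces $j^* = i$ in \eqref{eq:newEdges}; if $\phi_{i+1} \in \Phi_{v_k}$, then for any admissible $i' \le i$ the set $\{\phi_{i'+1},\dots,\phi_{i+1}\}$ would again lie in $\Phi_{v_k}$, forcing $j^* \ge i+1 > i$, so no such edge can exist. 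Conversely, if $\phi_{i+1}\notin\Phi_{v_k}$, then taking $i'=i$ gives $j^* = i$, so the edge $(v_l^i, v_k^i)$ is present for every in-neighbor $v_l$ of $v_k$. (If $v_k$ has no in-neighbor, then $v_k^i$ trivially has no incoming edge and is deleted regardless, which only strengthens the bound.)

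Granting this, the counting is routine. For each of the $r|\V|$ pairs $(v_k,i)$ with $0\le i\le r-1$, the probability that $v_k^i$ is pruned in the first sweep is at least $\Pr[\phi_{i+1}\in\Phi_{v_k}]=z$. The destination $v_d^r$ lies at index $i=r$, outside this range, and the source $v_s^t$ always retains an incoming edge (since $\phi_{t+1}\notin\Phi_{v_s}$ by the definition of $t$), so it is never one of the no-incoming vertices we count; hence neither requires special treatment. By linearity of expectation the expected number of deleted vertices is at least $zr|\V| = \tfrac{r}{r+1}\,z\,|\bar{\V}| \ge 0.5\,z\,|\bar{\V}|$, using $|\bar{\V}|=(r+1)|\V|$ and $r\ge 1$. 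For the edges, a vertex $v_k^i$ deleted for lack of an incoming edge carries away exactly its outgoing edges, which number $d^+(v_k)$, the out-degree of $v_k$ in $G$; distinct tails yield distinct edges, so there is no double counting, and $\sum_{v_k} d^+(v_k)=|\E|$. Thus the expected number of deleted edges is at least $z\,r\sum_{v_k} d^+(v_k) = zr|\E| = \tfrac{r}{r+1}\,z\,|\bar{\E}| \ge 0.5\,z\,|\bar{\E}|$, using $|\bar{\E}|=(r+1)|\E|$.

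In summary, once the $j^*$ rule of \eqref{eq:newEdges} is correctly inverted to characterize the no-incoming vertices, the remainder is linearity of expectation together with the crude but sufficient estimate $\tfrac{r}{r+1}\ge\tfrac12$. I would only take care to phrase the conclusion as an expectation bound, and to emphasize that the first-sweep deletions I count form a strict subset of all pruned vertices and edges, so the true reduction is at least as large.
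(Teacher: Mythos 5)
Your proof is correct. For the \emph{vertex} bound it is essentially the paper's own argument made rigorous: the paper likewise observes that when $\phi_i\in\Phi_{v_k}$ (an event of probability $z$), every edge arriving at copies of $v_k$ skips $v_k^{i-1}$, so $v_k^{i-1}$ has no incoming edge and is pruned; your explicit inversion of the $j^*$ rule in Eq.~\eqref{eq:newEdges} is the precise form of this claim, and your estimate $r|\V| = \tfrac{r}{r+1}|\bar{\V}| \ge 0.5|\bar{\V}|$ is the precise form of the paper's loose ``for any pair of vertices, we remove one with probability $z$'' halving step. Where you genuinely depart from the paper is the \emph{edge} bound. The paper obtains it by additionally assuming that ``the edges are uniformly distributed in the graph'' and asserting that removed nodes carry away a proportional share of edges, which is not a proof. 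Your argument needs no such assumption: each pruned no-incoming vertex $v_k^i$ has out-degree exactly $d^+(v_k)$ in the initial $\bar{G}$ (one edge per out-edge of $v_k$ in $G$), distinct deleted vertices own disjoint sets of outgoing edges, and $\sum_k d^+(v_k)=|\E|$, so linearity of expectation gives at least $zr|\E| = \tfrac{r}{r+1}z|\bar{\E}| \ge 0.5z|\bar{\E}|$ removed edges exactly, with no uniformity hypothesis. Two further points you make explicit that the paper leaves implicit, and rightly so: the bound must be read \emph{in expectation} (it cannot hold deterministically, since with positive probability no functions are placed anywhere and essentially nothing is pruned), and counting only the first sweep of no-incoming deletions is enough for the ``at least'' direction. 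Your treatment of the exempted vertices $v_s^t$ and $v_d^r$ (never members of the counted set, since $\phi_{t+1}\notin\Phi_{v_s}$ by definition of $t$ and $v_d^r$ sits at index $r$) is also a detail the paper skips.
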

\begin{proof}
	For any pair of nodes $(v_l, v_k)$, if the probability that function $\phi_i$ is available at node $v_k$ is $z$, then  virtual vertex $v_l^{i-1}$ will be connected to $v_k^i$ with probability $z$. Similarly, for any other pairs of the form $(v_m, v_k)$. In such cases, virtual vertex $v_k^{i-1}$ will have no incoming edges and will be removed in the pruning step. So, for any pair of vertices, we can remove one vertex with probability $z$, i.e., reducing the number of vertices by a half. So, the overall number of removed vertices is at least $0.5z|\bar{V}|$. If we assume that the edges are uniformly distributed in the graph, then for a certain percentage of the removed nodes, we will remove their corresponding edges, which results in a similar reduction in the number of edges. A special case is when $z=1$, then the number of vertices and edges will be the same of the original graph, $G$.
\end{proof}

\subsection{Algorithm Complexity}
For the initial $\bar{G}$, the number of vertices will be at most $|\bar{\V}| = (r+1)|\V|$, and the edges $|\bar{E}| = (r+1)|\E|$. The complexity of constructing $\bar{G}$ is $O(|\bar{\V}|^2)$. Then, after the pruning step, the number of vertices (resp., edges) will be at most $|\V'| = 0.5z|\bar{\V}|$ (resp., $|\E'| = 0.5z|\bar{\E}|$), where $z$ is the probability of availability of network functions at any vertex. We construct $\bar{G}$ only once for a given SFC constraint. Then, if we use Dijkstra's algorithm to find an SFC-constrained shortest path on the pruned $\bar{G}$, the complexity is $O(|\E'| \log |\V'|)$. 

\subsection{Throughput-Optimal Routing with SFC Constraints}
A general framework for throughput-optimal routing, called Universal Max-Weight (UMW) policy, has been proposed in \cite{OptimalControl}. In the UMW policy, each source maintains a virtual queue for each physical queue in the network. When a packet arrives, it computes the shortest path based on the length of the virtual queues. UMW policy considers different types of traffic. In the case of unicast traffic, we are able to extend the UMW policy for traffic with SFC constraints. We do that by integrating our SFC-constrained shortest path algorithm with the UMW policy.  The UMW policy with our shortest path algorithm remains throughput optimal. The proof of throughput optimality is omitted here since it is exactly the same as that of \cite{OptimalControl}. We provide some numerical results and interested readers can refer to \cite{OptimalControl} for further details about the UMW policy.

\section{SFC-constrained Maximum Flow (SFC-MF)}
\label{sec:SFC-MF}
The maximum flow problem is a classic problem, where the maximum possible flow from a source node to a destination node needs to be computed. Classic maximum flow algorithms ensure that the flow on each edge does not exceed its capacity, and the flow conservation constraint is satisfied. Service function chains constraints, which require each flow to traverse a
set of network functions in a pre-specified order before reaching its destination, make this problem more challenging. In this paper, this new SFC-constrained Maximum Flow problem is referred to as the SFC-MF problem. Note that the classic maximum flow algorithms (e.g., Ford-Fulkerson) are not directly applicable to SFC-MF due to the new constraints. We consider an undirected graph in which edges can be used to send flow in either direction, but the total flow in both directions cannot exceed the edge capacity. Also, we consider that each function has only one instance in the network. For an SFC constraint defined as $\text{SFC} = (s, \phi_1, \dots, \phi_r, d)$, we define a commodity $\alpha_i$ for each segment of the SFC, i.e., commodity $\alpha_1$ has a source $s$ and destination $\phi_1$, while commodity $\alpha_{r+1}$ has a source $\phi_r$ and destination $d$. We use $\mathcal{P}_{\alpha_i}$ to denote the set of all possible paths for commodity $\alpha_i$. The capacity of edge $e$ is denoted by $c_e$. Also, we let $x_p$ denote the amount of flow sent over path $p$. We formulate the SFC-MF problem as follows.
\begin{align}
	& \max \quad \lambda, \label{eq:fractional_MC} \\
	 \text{subject to} \nonumber\\
	& \sum_{p \in \mathcal{P}_{\alpha_i} } x_p \geq \lambda, \quad \forall \alpha_i, \label{eq:total_flow} \\
	& \sum_{\alpha_i} \sum_{p:e \in p, p \in \mathcal{P}_{\alpha_i}} x_p \leq c_e, \quad \forall e \in \E, \label{eq:edge_capacity_MF}\\
	& x_p\geq 0, \quad \forall p\in \mathcal{P}_{\alpha_i},
\end{align}
 where Eq. \eqref{eq:total_flow} is the total flow for commodity $\alpha_i$ over all of its possible paths. Eq. \eqref{eq:edge_capacity_MF} ensures that the amount of flow for all commodities over an edge does not exceed its capacity. 
 
 Problem \eqref{eq:fractional_MC} is a special case of the fractional multicommodity flow (FMCF) problem when the demand of all commodities is one. It can be solved using any LP algorithm, or approximation algorithm \cite{Karakostas2008, Fleischer1999}. An interesting use case is when a service provider needs to virtualize a particular network function in its network during an early stage of NFV deployment. So, we consider the problem of computing the maximum flow with the constraint that all packets must pass through this new VNF. For instance, suppose that we have an SFC constraint as in Fig. \ref{fig:must_stop_main}(\subref{fig:SFC_constraint}), and we want to virtualize only function $\phi_l$. For the network shown in Fig. \ref{fig:must_stop_main}(\subref{fig:must_stop}), most of the nodes support some PNFs. So, we can pick a node, say node $v_5$, to become a virtualized node and host the VNF $\phi_l$. Then, we may need to compute the maximum flow from any PNF to another PNF through node $v_5$. For this case, we propose an elegant combinatorial algorithm based on the Ford-Fulkerson algorithm in the following subsection.

 \subsection{Maximum Flow with One Must-stop Node Algorithm}

 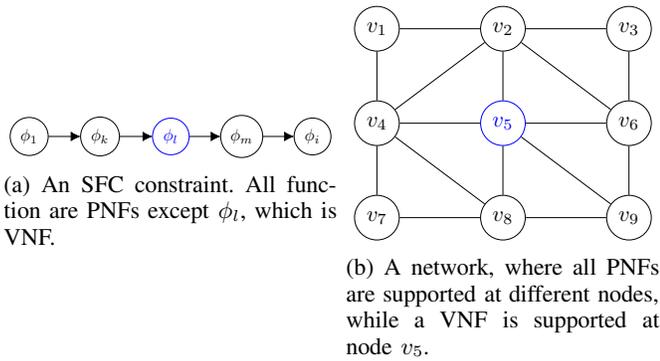
\begin{figure}
 	\begin{subfigure}{0.50\linewidth}
 		\centering	
 		\resizebox{\linewidth}{!}{
 				\begin{tikzpicture}[transform shape]
 				\node[vertex](p1) at (-3, 2) {$ \phi_1 $};
 				\node[vertex](pk) at (-1.5, 2) {$ \phi_k $};
 				\node[vertex, color=blue](v) at (0, 2) {$ \phi_{l} $};
 				\node[vertex](pm) at (1.5, 2) {$ \phi_{m}$};
 				\node[vertex](pi) at (3, 2) {$ \phi_i $};
 				\begin{scope}[every path/.style={->}, every node/.style={inner sep=1pt}]
 				\path (p1) edge [ anchor= south] node {} (pk);
 				\path (pk) edge [ anchor= south] node {} (v);
 				\path (v) edge [ anchor= south] node {} (pm);
 				\path (pm) edge [ anchor= south] node {} (pi);
 				
 				\end{scope} 
 				\end{tikzpicture}
 		}
 		\caption{An SFC constraint. All function are PNFs except $\phi_l$, which is VNF.}
 		\label{fig:SFC_constraint}
 	\end{subfigure}
 	\begin{subfigure}{0.47\linewidth}
 		\centering
 		\resizebox{\linewidth}{!} {
 			\begin{tikzpicture}[transform shape]
 			\node[vertex](v_1) at (0, 1) {$v_1$};
 			\node[vertex](v_2) at (2, 1) {$v_2$};
 			\node[vertex](v_3) at (4, 1) {$v_3$};
 			\node[vertex](v_4) at (0, -0.5) {$v_4$};
 			\node[vertex, color=blue](v_5) at (2, -0.5) {$v_5$};
 			
 			\node[vertex](v_6) at (4, -0.5) {$v_6$};
 			\node[vertex](v_7) at (0, -2) {$v_7$};
 			\node[vertex](v_8) at (2, -2) {$v_8$};
 			\node[vertex](v_9) at (4, -2) {$v_9$};

 			\begin{scope}[every path/.style={-}, every node/.style={inner sep=1pt}]
 			\path (v_1) edge [ anchor= south] node {} (v_2);
 			\path (v_1) edge [ anchor= south] node {} (v_4);
 			
 			\path (v_2) edge [ anchor= south] node {} (v_3);
 			\path (v_2) edge [ anchor= south] node {} (v_5);
 			\path (v_2) edge [ anchor= south] node {} (v_6);
 			\path (v_3) edge [ anchor= south] node {} (v_6);
 			
 			\path (v_4) edge [ anchor= south] node {} (v_7);
 			\path (v_4) edge [ anchor= south] node {} (v_5);
 			\path (v_4) edge [ anchor= south] node {} (v_8);
 			\path (v_5) edge [ anchor= south] node {} (v_6);
 			\path (v_5) edge [ anchor= south] node {} (v_8);
 			\path (v_5) edge [ anchor= south] node {} (v_9);
 			\path (v_6) edge [ anchor= south] node {} (v_9);
 			\path (v_2) edge [ anchor= south] node {} (v_4);
 			\path (v_8) edge [ anchor= south] node {} (v_9);
 			\path (v_7) edge [ anchor= south] node {} (v_8);
 			
 			\end{scope} 
 			\end{tikzpicture}
 		}
 		\caption{A network, where all PNFs are supported at different nodes, while a VNF is supported at node $v_5$.}
 		\label{fig:must_stop}
 	\end{subfigure}
 	\caption{Example of an SFC constraint where only one function, $\phi_l$, will be virtualized at node $v_5$.}
 	\label{fig:must_stop_main}
 \end{figure}

 \begin{figure}
 	\centering
 	\begin{tikzpicture}[transform shape]

 	\node[vertex](S) at (0, 0) {$ s $};	
 	\node[vertex](A) at (1.5, 0) {$ a $};
 	\node[vertex](B) at (3, 0){$ b $};	
 	\node[vertex](D) at (4.5, 0) {$ d $};
 	\node[vertex](T1) at (2.25, 1) {$ t $};	
 	\node[vertex](T2) at (2.25, -1.5) {$ T $};	
 	\begin{scope}[every path/.style={-}, every node/.style={inner sep=1pt}]
 	\path (S) edge [ anchor= south] node {2} (A);
 	\path (A) edge [ anchor= south] node {2} (B);
 	\path (A) edge [ anchor= east] node {1} (T1);
 	\path (B) edge [ anchor= west] node {2} (T1);
 	\path (B) edge [ anchor= south] node {2} (D);
 	\path (S) edge [ anchor= south, densely  dotted] node {$\infty$} (T2);
 	\path (D) edge [ anchor= south, densely  dotted] node {$\infty$} (T2);
 	\end{scope} 
 	\end{tikzpicture}
 	\caption{A network with must-stop node $t$ and an added virtual node $T$.}
 	\label{fig:MF-T}
 \end{figure}
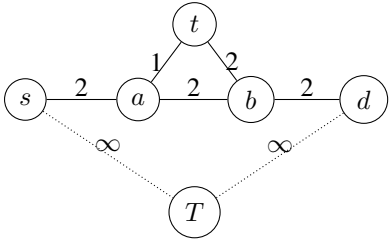
  We denote the maximum flow with one must-stop node $t$ as $F_{s,d}^t$. Let $\mathcal{P}_{st}$  (resp.,  $\mathcal{P}_{td}$) be the set of all paths between node $s$ and $t$ (resp., $t$ and $d$). Then, $F_{s, d}^t$ can be defined as the maximum feasible flow that can be sent simultaneously from $s$ to $t$, over $\mathcal{P}_{st}$, and from $t$ to $d$, over $\mathcal{P}_{td}$. We use $F_{s,t}$ (resp., $F_{t, d}$) to denote the maximum feasible flow that can be sent over $\mathcal{P}_{st}$ (resp., $\mathcal{P}_{td}$). To compute $F_{s,d}^t$, we start by adding a virtual node $T$ to the graph and connect it to the source, $s$, and destination, $d$, with infinite capacity. Then, we compute the standard maximum flow for the following cases: (1) The maximum flow from $t$ to $T$, divided by 2, denoted by $F_{t, T}/2$; (2) $F_{s, t}$; (3) $F_{t, d}$. Then, $F_{s,d}^t$ is the minimum among the above three quantities. We can define $F_{t, T}$ as the maximum feasible flow that can be sent from $t$ to $T$ over both $\mathcal{P}_{st}$ and $\mathcal{P}_{td}$. Since we consider undirected graphs, then, the maximum flow from $s$ to $t$ and from $t$ to $s$ will be the same. In the following, we show an example of how to compute the maximum flow through node $t$ for the network shown in Fig. \ref{fig:MF-T}. First, we connect a virtual node $T$ to nodes $s$ and $d$ with infinite capacity. Then, we compute these three quantities, $F_{t, T}/2 , F_{s,t},$ and $F_{t, d}$, which will be 1.5, 2, and 2, respectively. It can be verified that the minimum of them is 1.5, which is equal to the maximum flow from node $s$ to $d$ through node $t$. 
  
  We prove this result in Lemma \ref{Lemma:one_stop}.     
\begin{lemma}
	$F_{s,d}^t = \min\{ F_{t,T}/2 , F_{s,t}, F_{t, d}\}$.
	\label{Lemma:one_stop}
\end{lemma}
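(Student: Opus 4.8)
The plan is to establish the two inequalities $F_{s,d}^t \le \min\{F_{t,T}/2, F_{s,t}, F_{t,d}\}$ and $F_{s,d}^t \ge \min\{F_{t,T}/2, F_{s,t}, F_{t,d}\}$ separately. The first (the upper bound) is a direct consequence of flow decomposition, while the second (achievability) I would prove with a single max-flow--min-cut argument on an auxiliary network.

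For the upper bound, fix a simultaneous routing that attains $\lambda=F_{s,d}^t$, i.e.\ path flows on $\mathcal{P}_{st}$ and on $\mathcal{P}_{td}$, each of total value $\lambda$, that jointly obey the capacity constraint \eqref{eq:edge_capacity_MF}. Ignoring the $\mathcal{P}_{td}$ component leaves a feasible $s$--$t$ flow of value $\lambda$, so $\lambda\le F_{s,t}$; symmetrically $\lambda\le F_{t,d}$. To bound $F_{t,T}$, reverse the $s$--$t$ component into a $t$--$s$ flow (legitimate since the graph is undirected) and append the infinite-capacity edges $(s,T)$ and $(d,T)$ to the two components; this produces a $t$--$T$ flow of value $2\lambda$ that still respects every original capacity, whence $2\lambda\le F_{t,T}$. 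Combining the three gives the upper bound.

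For achievability, write $M=\min\{F_{t,T}/2,F_{s,t},F_{t,d}\}$ and form the auxiliary graph $G'$ obtained from $G$ by adding $T$ and joining it to $s$ and to $d$ by edges of capacity exactly $M$ (rather than $\infty$). I claim the maximum $t$--$T$ flow in $G'$ equals $2M$. It is at most $2M$ because the only edges entering $T$ have total capacity $2M$. For the reverse bound I would invoke the max-flow--min-cut theorem and check that every $t$--$T$ cut $(S,\bar S)$ has capacity at least $2M$, by a four-way case analysis on the side of $s$ and $d$: if both lie in $S$, the two sink edges already contribute $2M$; if exactly one of them, say $d$, lies in $\bar S$, then $s\in S$ so the edge $(s,T)$ contributes $M$ while the original edges crossing the cut separate $t$ from $d$ and hence contribute at least $F_{t,d}\ge M$ (symmetrically if $s\in\bar S$, using $F_{s,t}$); and if both lie in $\bar S$, the crossing original edges separate $t$ from both $s$ and $d$ and thus contribute at least $F_{t,T}\ge 2M$. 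Hence the $t$--$T$ flow of value $2M$ exists and saturates both sink edges, so it delivers exactly $M$ units into $s$ and $M$ units into $d$. Decomposing it into $t$--$s$ and $t$--$d$ paths and reversing the former yields two simultaneous flows of value $M$ that together respect \eqref{eq:edge_capacity_MF}, giving $F_{s,d}^t\ge M$.

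The main obstacle is the achievability direction: an arbitrary value-$2M$ flow into the super-sink need not split evenly between $s$ and $d$, and one cannot appeal to max-flow--min-cut for genuine two-commodity flows. The device of capping the two sink edges at $M$ is what forces the even split, and the reason the cut argument closes is that both ``commodities'' share the common source $t$, which collapses the two-commodity feasibility question into a single-commodity cut computation. A secondary point to handle carefully is the undirected, path-based capacity model of \eqref{eq:edge_capacity_MF}: I must ensure that when the value-$2M$ flow is decomposed, the combined path usage on each edge does not exceed $c_e$, which follows because the decomposition's total on any edge equals the flow value there.
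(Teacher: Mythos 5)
Your proof is correct, and the achievability half takes a genuinely different route from the paper. The paper proves the lower bound by a case analysis on which of the three quantities is the minimum: it fixes a maximum $t$--$T$ flow realization, writes its value as $N_1+N_2$ (the portions reaching $s$ and $d$), and when $N_1\neq N_2$ it rebalances by an explicit iterative rerouting argument --- cancelling flow on a $\mathcal{P}_{st}$ path and splicing its tail onto an under-used $\mathcal{P}_{td}$ path --- with a termination argument that leans on integral capacities. Your device of capping the two sink edges at $M=\min\{F_{t,T}/2,F_{s,t},F_{t,d}\}$ replaces all of that with a single max-flow--min-cut computation: the four-way cut analysis (using $F_{s,t}$, $F_{t,d}$, or $F_{t,T}$ depending on which side $s$ and $d$ fall) shows the min cut is $2M$, saturation of both capped edges forces the even split, and flow decomposition plus reversal of the $t$--$s$ component gives the simultaneous routing. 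This is cleaner and more rigorous than the paper's argument: it needs no integrality assumption, no iteration count, and no careful choice of ``the intersecting edge nearest to $d$.'' It also has a nice side benefit: the paper, after the lemma, recovers the actual edge flows by exactly this capped-$T$ construction without proof, so your cut argument simultaneously certifies the correctness of that step. One point to make explicit if you write this up: a decomposition path ending at the edge $(s,T)$ may itself pass through $d$ (and vice versa), which is harmless since it is still a valid $t$--$s$ path, and cycles in the decomposition are simply discarded.
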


\begin{proof}
	We start by showing that $F_{t,T}/2, F_{s,t}, F_{t, d}$ are upper bounds for $F_{s,d}^t$. Then, we show that the minimum of these three values is also a lower bound of $F_{s,d}^t$. We state the following upper bounds. 
	\begin{itemize}
		\item $F_{s,d}^t \leq F_{t, T}/2$ as we cannot send more than half of $F_{t, T}$ simultaneously over $\mathcal{P}_{st}$ and $\mathcal{P}_{td}$.
		\item $F_{s,d}^t \leq F_{s,t}$ because all flow from node $s$ should reach node $t$ before reaching node $d$.
		\item $F_{s,d}^t \leq F_{t, d}$ because all flow reaching node $d$ should pass by node $t$ first. 
	\end{itemize}
	Then, we prove that the minimum of these upper bounds is also a lower bound. We present three cases: each case corresponds to when one of the upper bounds is the minimum as follows.
	
	Case I: when {$F_{t, T}/2$ is the minimum}, we will show that we can always send $F_{t, T}/2$ simultaneously over $\mathcal{P}_{st}$ and $\mathcal{P}_{td}$. First, let $F_{t, T} = N_1 + N_2$, where $N_1$ (resp., $N_2$) is the amount of flow that is sent over $\mathcal{P}_{st}$ (resp., $\mathcal{P}_{td}$) paths in the current realization. We have three subcases: A) $N_1 = N_2$, B) $N_1 > N_2$, and C) $N_1 < N_2$, which are discussed in the following.
	
	Case I-A: when $N_1 = N_2$. This is a trivial case. It is easy to see that we can send $F_{t, T}/2$ simultaneously over $\mathcal{P}_{st}$ and $\mathcal{P}_{td}$.  
	
	Case I-B: when $N_1 > N_2$, i.e., $N_1 = F_{t, T}/2 + c$ and $N_2 = F_{t, T}/2 - c$, for a positive $c$. We will show that we can always remove $c$ units of flow from $\mathcal{P}_{st}$ paths and send the same amount over $\mathcal{P}_{td}$ paths, where this reallocation of flows will make $N_1 = N_2$. We use $x_p$ to denote the current flow over path $p$. Also, for one realization of the maximum flow over paths in $\mathcal{P}_{td}$, we use $x'_p$ to denote the  amount of flow sent over every path $p$ in $\mathcal{P}_{td}$. Due to the intersection of edges of paths in $\mathcal{P}_{st}$ and $\mathcal{P}_{td}$, the capacity of these edges is shared by such paths. So, a flow over a path in $\mathcal{P}_{st}$ may affect the amount of flow over some paths in $\mathcal{P}_{td}$, i.e., making $x_p$ less than $x'_p$ for such paths. Since $\sum_{p \in \mathcal{P}_{td}} x'_p = F_{t,d}$, which is greater than $F_{t, T}/2$, then, it is feasible to reallocate $c$ units of flow to over $\mathcal{P}_{td}$ paths; the details are provided in the following.
	
	Since $F_{t,d}$ is greater that $N_2$, then, we can find  a set of paths in $ \mathcal{P}_{td}$ that satisfy $x_{p} < x'_{p}$, so, we pick one of them and denote it as $p_i$. Then, for the set of paths in $\mathcal{P}_{st}$ that intersect with some edges of path $p_i$, we select a path $p_j$ that has positive $x_{p_j}$ and intersects with path $p_i$ at an edge that is the nearest to node $d$. Let $p_i = (e_1, \dots , e_f, e_{f+1}, \dots, e_n)$, and $p_j = (\bar{e}_1, \dots ,  \bar{e}_l, \bar{e}_{l+1}, \dots, \bar{e}_m)$, with $e_f \in p_i$ is the closest edge to node $d$ that intersects with $\bar{e}_l \in p_j$, i.e., $e_f = \bar{e}_l$. Since $e_f$ is the closest edge to node $d$ that intersects with a path in $\mathcal{P}_{st}$, then, edges $(e_{f+1},  \dots, e_n)$ can support $x'_{p_i} - x_{p_i}$ units of flow.  Next, we define $h = \min(x'_{p_i} - x_{p_i}, x_{p_j}, c)$ and cancel this amount of flow over path $p_j$. As a result, edges $(\bar{e}_{l}, \bar{e}_{l+1}, \dots ,\bar{e}_m)$, which are part of $p_j$, will be able to support an additional $h$ units of flow.  Finally, we construct a path $p_c$ by taking edges $(\bar{e}_{l}, \bar{e}_{l+1}, \dots ,\bar{e}_m)$ from $p_j$ in reverse order, i.e., $(\bar{e}_m, \dots, \bar{e}_{l+1}, \bar{e}_{l})$,  and edges $(e_{f+1},  \dots, e_n)$ from $p_i$, and forming a new path $p_c = (\bar{e}_m, \dots, \bar{e}_{l+1}, \bar{e}_{l}, e_{f+1}, \dots, e_n)$. We can see that over path $p_c$ we can send additional $h$ units of flow from node $t$ to $d$. 
	
	We subtract $h$ from $c$ and repeat the same process until the value of $c$ becomes zero. This is feasible because as long as the amount of flow sent over $\mathcal{P}_{td}$ is less than $F_{t, T}/2$, then, we can find a path $p$ in $\mathcal{P}_{td}$ with $x_p < x'_p$.  Also, we assume integral capacity of edges, so, by using Ford-Fulkerson algorithm, the value of $N_1, N_2, x_p, x'_p$, and $F_{t, T}$ are integral. In addition, the value of $c$ is a multiple of $0.5$ because $c = N_1 - F_{t, T}/2$. From this, we conclude that the value of $h$ is a multiple of $0.5$. So, in at most $2 \times F_{s,d}^t$ iterations, we can make $N_1 = N_2$ by the above procedure.  
	
	Case I-C: this is a symmetric case of Case I-B.
	
	Case II: when {$F_{s, t}$ is the minimum}, we will show that $F_{s, t}$ can be sent simultaneously over $\mathcal{P}_{st}$ and $\mathcal{P}_{td}$ paths. Since $F_{s, t}$ is the maximum flow that can be sent over paths in $\mathcal{P}_{st}$ and we have that $F_{t, T}/2 \geq F_{s,t}$, then $F_{t, T} \geq 2F_{s,t}$. That means the amount of flow sent over $\mathcal{P}_{td}$ should be at least $F_{s,t}$.
	
	Case III: when {$F_{t, d}$ is the minimum}. The proof of this case follows the same argument as that of Case II.
\end{proof}
	 
After we find the value of the maximum flow, it remains to find the actual flow on each edge and the direction of the flow. To do that, we again use a virtual node $T$ and connect it to the source $s$ and destination $d$, but with a capacity of $F_{s,d}^t$. Then, we compute the maximum flow in this new graph from $t$ to $T$. When a path includes edge $(s,  T)$, then the flow on the links along this path is reversed. That would give us the amount of flow and direction on each edge.

\section{Virtual Network Functions Placement}
\label{sec:VSF_placement}
In this section, we are interested in the question of how to place VNFs such that the value of the maximum flow with SFC constraints is equal to the value of the original maximum flow without SFC constraints. The maximum flow under SFC constraint is not expected to remain as the original maximum flow, depending on the placement of the required network functions specified by the SFC constraint. Moreover, in order to minimize the total operational expenses of adding commodity servers in the network to support VNFs, we aim to minimize the number of network nodes where these functions will be placed. We assume that all VNFs can be hosted at any node. We start by formulating the problem and proving its NP-hardness.

We let $\mathcal{P'}_{sd}$ denote the set of admissible paths from node $s$ to $d$ for a given SFC constraint. We want to select a minimum number of nodes such that the total flow over the admissible paths $\mathcal{P'}_{sd}$ is the original maximum flow (without any SFC constraint). Define $k_{i}$ as a binary variable to denote whether node $i$ hosts the required VNFs (i.e., node $i$ is a virtualized node). Also, $x_p$ denotes the amount of flow over path $p$. We use $F_{s,d}$ to denote the original maximum flow. The problem can be formulated as follows.
\begin{align}
&\min \sum_{i \in \V \backslash \{s, d\}}  k_{i}, \label{eq:placement_main}\\
\text{subject to} \nonumber \\
&\sum_{p \in \mathcal{P'}_{sd}} x_p = F_{s,d}, \label{eq:paths_flow} \\
&\sum_{p: e\in p} x_p \leq c_e, \quad \forall e \in \E, \label{eq:edge_capacity1}\\
& x_p \geq 0, \quad \forall p \in \mathcal{P'}_{sd}, \\
&k_{i} \in \{0, 1\}, \quad \forall i,
\end{align} 
where the objective is to minimize the number of virtualized nodes, described by \eqref{eq:placement_main}. Eq. \eqref{eq:paths_flow} ensures that flow over all admissible paths equals the original maximum flow, while Eq. \eqref{eq:edge_capacity1} ensures that the total flow over an edge does not exceed its capacity.

Next, we will show  in Lemma \ref{Lemma:placement} that this problem is NP-hard based on a reduction from the classic set-cover problem.
\begin{lemma}
	The minimum placement of VNFs to achieve the original maximum flow is NP-hard.
	\label{Lemma:placement}
\end{lemma}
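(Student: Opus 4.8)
The plan is to prove NP-hardness via a polynomial-time reduction from the classic \textbf{set-cover} problem, which is well known to be NP-hard. Recall an instance of set cover: a universe $U = \{u_1, \dots, u_n\}$, a collection of subsets $S_1, \dots, S_m \subseteq U$ with $\bigcup_j S_j = U$, and the goal of selecting the fewest subsets whose union is $U$. I would construct, in polynomial time, a network $G$ together with an SFC constraint and a source--destination pair so that a placement of VNFs achieving the original maximum flow corresponds exactly to a set cover, and the number of virtualized nodes equals the number of chosen subsets. The decision version (``can the original max flow be achieved with at most $K$ virtualized nodes?'') then answers ``is there a set cover of size at most $K$?'', establishing NP-hardness of the minimization problem.

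For the construction, the natural idea is to associate one ``element'' gadget with each $u_i \in U$ and one ``set'' node with each $S_j$. First I would route the source $s$ so that the original maximum flow $F_{s,d}$ decomposes into $n$ unit-capacity streams, one per element $u_i$, each of which must traverse its element gadget on the way to $d$; set capacities so that the only way to push the full $F_{s,d}$ units of flow is to use every element stream. Next, to encode the SFC constraint, I would use a single required network function $\phi$ (so the admissible paths in $\mathcal{P}'_{sd}$ are exactly those passing through some node hosting $\phi$): for each element $u_i$, I route its stream through exactly those set nodes $S_j$ with $u_i \in S_j$, placing these as candidate hosts for $\phi$ on that stream's path. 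Then the stream for $u_i$ can carry flow under the SFC constraint if and only if at least one set node $S_j \ni u_i$ is virtualized, i.e.\ is chosen to host $\phi$. Achieving the full original max flow forces every element stream to be admissible, which forces, for every $u_i$, some $S_j \ni u_i$ to be virtualized --- precisely the covering condition. Since element nodes and $s,d$ need not host $\phi$, minimizing virtualized nodes minimizes chosen set nodes, so the optimum of \eqref{eq:placement_main} equals the minimum set cover size.

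The main obstacle I anticipate is the \emph{flow-forcing} part of the gadget rather than the covering logic, which is the easy direction. I must guarantee two things simultaneously: (i) that the \emph{unconstrained} maximum flow $F_{s,d}$ genuinely equals the sum over all $n$ element streams (so that achieving $F_{s,d}$ is impossible unless \emph{every} element is served), and (ii) that routing each element's flow through its set nodes does not create unintended bottlenecks or shared edges that let one virtualized set node accidentally serve an element it should not, or that reduce the achievable SFC-constrained flow below $F_{s,d}$ for reasons unrelated to the cover. I would handle this by making the element streams edge-disjoint in $G$ except at the shared set nodes, giving each set node enough internal capacity (or splitting it into an in/out pair joined by an uncapacitated internal edge carrying $\phi$) so that a virtualized $S_j$ can simultaneously serve all elements $u_i \in S_j$, and verifying the max-flow value by an explicit flow together with a matching cut. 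Care is also needed so that the SFC ``order'' constraint is trivially satisfied with a single function $\phi$, keeping the admissible-path set $\mathcal{P}'_{sd}$ clean.

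Finally I would verify correctness in both directions: if $\{S_j : j \in C\}$ is a set cover, virtualizing exactly those $|C|$ set nodes makes every element stream admissible and hence lets all $F_{s,d}$ units flow under the SFC constraint; conversely, any VNF placement achieving $F_{s,d}$ under the constraint must, by the forcing argument, render every element stream admissible, so the virtualized set nodes must cover $U$. Both directions are polynomial-time computable and the construction has size polynomial in $n+m$, completing the reduction and establishing that the minimum-placement problem of \eqref{eq:placement_main}--\eqref{eq:edge_capacity1} is NP-hard.
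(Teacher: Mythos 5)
Your proposed reduction is essentially the paper's own: one node per subset, one unit-capacity stream per element threaded through the set nodes containing it, a single function $\phi$, and the claim that achieving the unconstrained max flow under the SFC constraint is equivalent to the chosen (virtualized) set nodes forming a cover. The forward direction is fine in both write-ups: a cover of size $K$ gives a placement of size $K$, since each element's own stream then passes a virtualized node and routing one unit per stream attains $F_{s,d}$. The genuine gap is the converse ``forcing'' step, which you state as: achieving the full original max flow forces every element stream to be admissible, hence forces some virtualized $S_j \ni u_i$ for every $u_i$. This is false. Value $F_{s,d}$ only forces all source and sink edges to be saturated; it does not force the flow to decompose along the element streams. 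At any shared set node (a set containing two elements) a unit can enter on one element's stream and leave on another's, so an uncovered element's source and sink edges can be saturated by such ``crossover'' paths that pick up their required function on virtualized nodes whose sets do \emph{not} contain that element. Your mitigations (edge-disjoint streams, in/out splitting, large internal node capacity) do not help, because the crossover happens precisely at the shared set nodes that the construction must retain; and an explicit flow plus matching cut certifies the max-flow \emph{value}, not the claimed structure of the constrained flow.

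Concretely, take universe $\{0,1,2,3,4\}$ and sets $y=\{1,2,3\}$, $z=\{3,4\}$, $W=\{0,3\}$, $W'=\{0,3\}$. The minimum cover has size $3$ (only $W,W'$ contain $0$; $\{W,y\}$ misses $4$ and $\{W,z\}$ misses $1,2$). Build the streams with the legitimate orders: element $0$: $s\to W\to W'\to d$; elements $1,2$: $s\to y\to d$; element $3$: $s\to y\to W\to z\to W'\to d$; element $4$: $s\to z\to d$. The max flow is $5$, and the placement $\{y,z\}$ of size $2$ achieves it via the admissible paths $s\to y\to d$ (twice), $s\to y\to W\to W'\to d$, $s\to W\to z\to W'\to d$, and $s\to z\to d$: every path passes $y$ or $z$, yet element $0$ is uncovered. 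So the minimum placement is strictly smaller than the minimum cover, the decision problems do not correspond, and the reduction as described does not establish NP-hardness (the lemma may still be true, but this proof does not show it). A repair would have to prescribe and justify the order in which each stream visits its set nodes so that crossovers can never help --- in this instance the order $W,W',y,z$ for element $3$ restores the correspondence --- and prove such an order always exists; that argument is missing. For what it is worth, the paper's one-sentence converse (``each unit of flow will pass by one of these vertices, so the corresponding subsets will cover all elements'') has exactly the same flaw, so you have reproduced the paper's proof, gap included; the difference is that you explicitly flagged the forcing step as the crux but then proposed a fix that does not address it.
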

\begin{proof}

 We prove that by a reduction from the minimum set-cover problem. In the set-cover problem, we are given a set $\mathcal{M}$ of $n$ elements, $\mathcal{M} =\{m_1, m_2, \dots, m_n\}$, and a collection of subsets $\mathcal{S}=\{s_1, s_2, \dots, s_l\}$, where each $s_i$ is a subset of $\mathcal{M}$ and the union of all subsets in $\mathcal{S}$ is $\mathcal{M}$. The minimum set-cover problem is to find the minimum number of subsets in $S$ such that their union is $\mathcal{M}$. So, given an instance of the set-cover problem $(\mathcal{S}, \mathcal{M})$, we will reduce it to our problem. We construct a graph $G =(\V, \E)$. Each vertex in $\V$ corresponds to a subset in $\mathcal{S}$, plus one vertex as a source and one vertex as a destination. For each element $m_i \in \mathcal{M}$, we construct a path that connects the subsets containing element $m_i$ in any order and connect the first node in this path to the source and the last node to the destination. The capacity of each edge along the constructed path is set to one; if an edge has been established before, its capacity is increased by one. So, each constructed path will contribute a unit flow to the maximum flow. If we can solve the minimum set-cover problem, then the corresponding vertices in $G$ will cover all possible flow because each element in the set  corresponds to a unit of flow. Similarly, if we can find the minimum number of vertices to achieve the maximum flow, then, each unit of flow will pass by one of these vertices, so the corresponding subsets will  cover all elements in $\mathcal{M}$. 	
\end{proof}

The hardness of the problem comes from two parts. First, the maximum flow can be achieved through a different set of paths, which are hard to list, and each set of paths may yield a different placement. Moreover, finding the minimum number of nodes to cover a set of paths is also NP-hard \cite{Sang}. Also, it is worth noting that if we consider service function chains where functions cannot be hosted at one node, then the problem becomes harder. 

The formulation in \eqref{eq:placement_main} has an exponential number of constraints as it requires to list all admissible paths in Eq. \eqref{eq:paths_flow}, which could be exponential. So, we provide an equivalent formulation for the problem  that can be solved using any Integer Linear Programming (ILP) solver. 

\subsection{ILP Formulation of VNFs Placement}
First, we introduce the following notations. We use $f_{ij}^0$ (resp., $f_{ij}^1$) to denote the amount of unprocessed (resp., processed) flow on link $(i, j)$ by the VNFs of the SFC constraint. Also, we use $\delta_i^+$ (resp.,  $\delta_i^-$)  to denote the set of incoming (resp., outgoing) edges of node $i$. Finally, we use $c_{i, j}$ to denote the capacity of edge $(i, j)$. The problem becomes:
\begin{align}
& \min \sum_{i \in \V \backslash \{s, d\}}  k_{i}, \label{eq:placement_equivlent} \\
\text{subject to} \nonumber \\
&\sum_{j \in \delta_i^+} f_{ji}^0 + \sum_{j \in \delta_i^+} f_{ji}^1 = \sum_{k \in \delta_i^-} f_{ik}^0 + \sum_{k \in \delta_i^-} f_{ik}^1, 	\label{eq:conservation} \\
& \sum_{k \in \delta_i^-} f_{ik}^1 = k_{i} \sum_{j \in \delta_i^+} f_{ji}^0 + \sum_{j \in \delta_i^+} f_{ji}^1, 	\label{eq:new_conservation} \\
&\sum_{k \in \delta_s^-} f_{sk}^0 = F_{s,d}, 	\label{eq:max_source} \\
& \sum_{j \in \delta_d^+} f_{jd}^1 = F_{s,d}, \label{eq:max_dst} \\
& f_{ij}^0 + f_{i, j}^1 \leq c_{i, j}, \quad \forall (i, j) \in \E, \label{eq:edge_capacity} \\
& k_{i} \in \{0, 1\}, \label{eq:binary} \\
& f_{ij}^t \geq 0, \quad \forall i, j \in \V, t \in \{0, 1\},
\end{align}
where, when not specified, $ i \in \V \backslash \{s, d\}$.
Eq. \eqref{eq:conservation} is the standard flow conservation constraint, while Eq. \eqref{eq:new_conservation} is to ensure that the amount of processed flow leaving node $i$ is either an unprocessed flow that is processed by node $i$ or a flow that has been processed by other nodes before entering node $i$. Eq. \eqref{eq:max_source} and \eqref{eq:max_dst} ensure that the unprocessed flow from the source will reach the destination as a processed flow, and the amount of this flow equals the original maximum flow. Eq. \eqref{eq:edge_capacity} ensures that the amount of both processed and unprocessed flow over each edge does not exceed its capacity. A node is either an intermediate node or virtualized node (i.e., hosts all the VNFs), which is considered by the binary variable $k_i$ in Eq. \eqref{eq:binary}. 

The formulation in \eqref{eq:placement_equivlent} can be solved by any ILP solver, which we were able to solve for large instances (e.g., for network with 100 nodes) in a few minutes. Moreover, we show by simulations that the maximum flow can be achieved by placing the VNFs at a small number of nodes even when the graph is large. This indicates that the operators may be able to introduce VNFs in their networks at a low starting cost without impacting the amount of flow that can be sent.

\section{Numerical results}
\label{sec:numerical_analysis}
\subsection{SFC-Constrained Shortest Path Results}
\begin{figure}
	\centering
	\begin{subfigure}{.5\linewidth}
		\centering
		\includegraphics[width=1\linewidth]{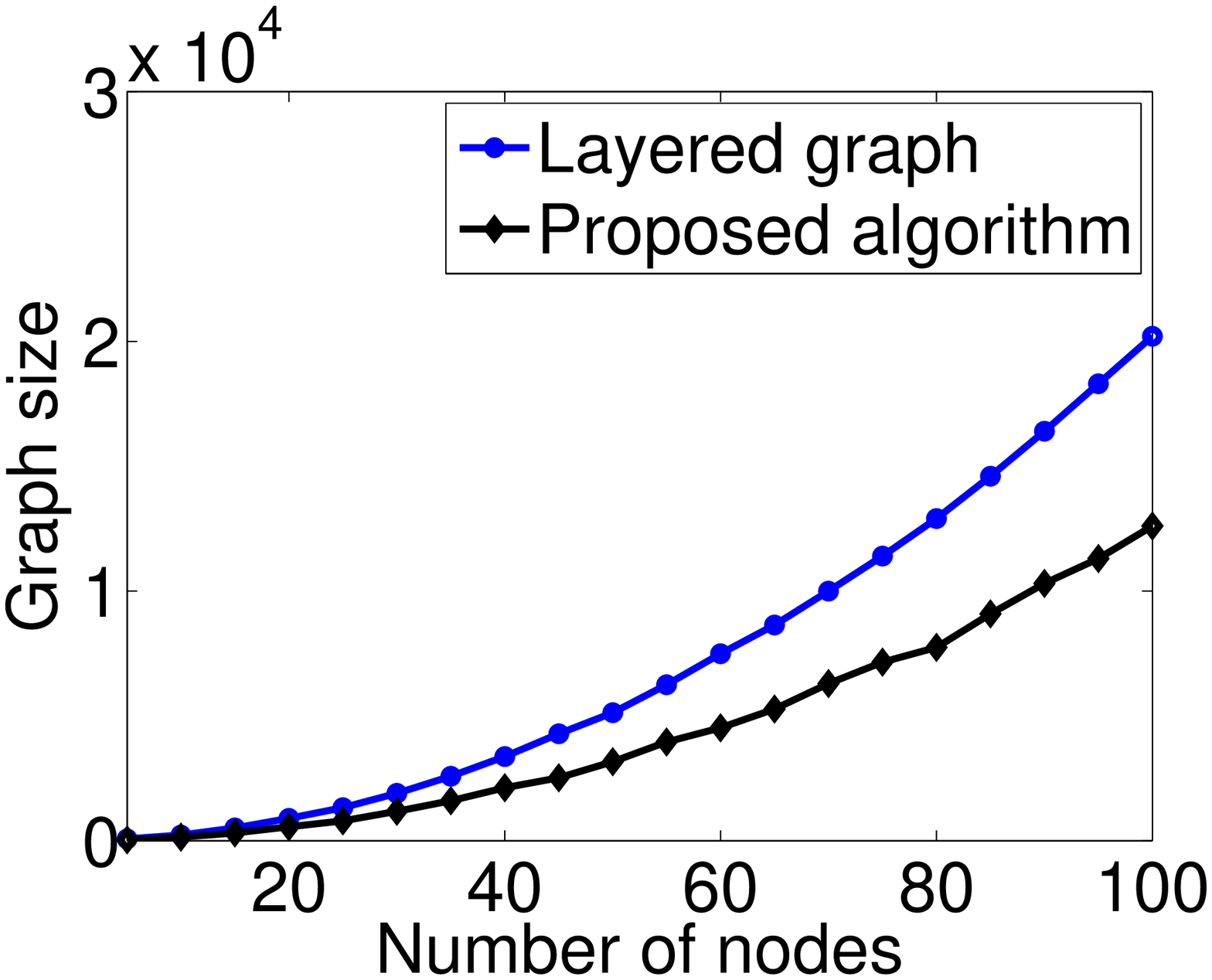}
		\caption{Graph size with different number of nodes.}
		\label{fig:our_layered1}
	\end{subfigure}%
	\begin{subfigure}{.5\linewidth}
		\centering
		\includegraphics[width=1\linewidth]{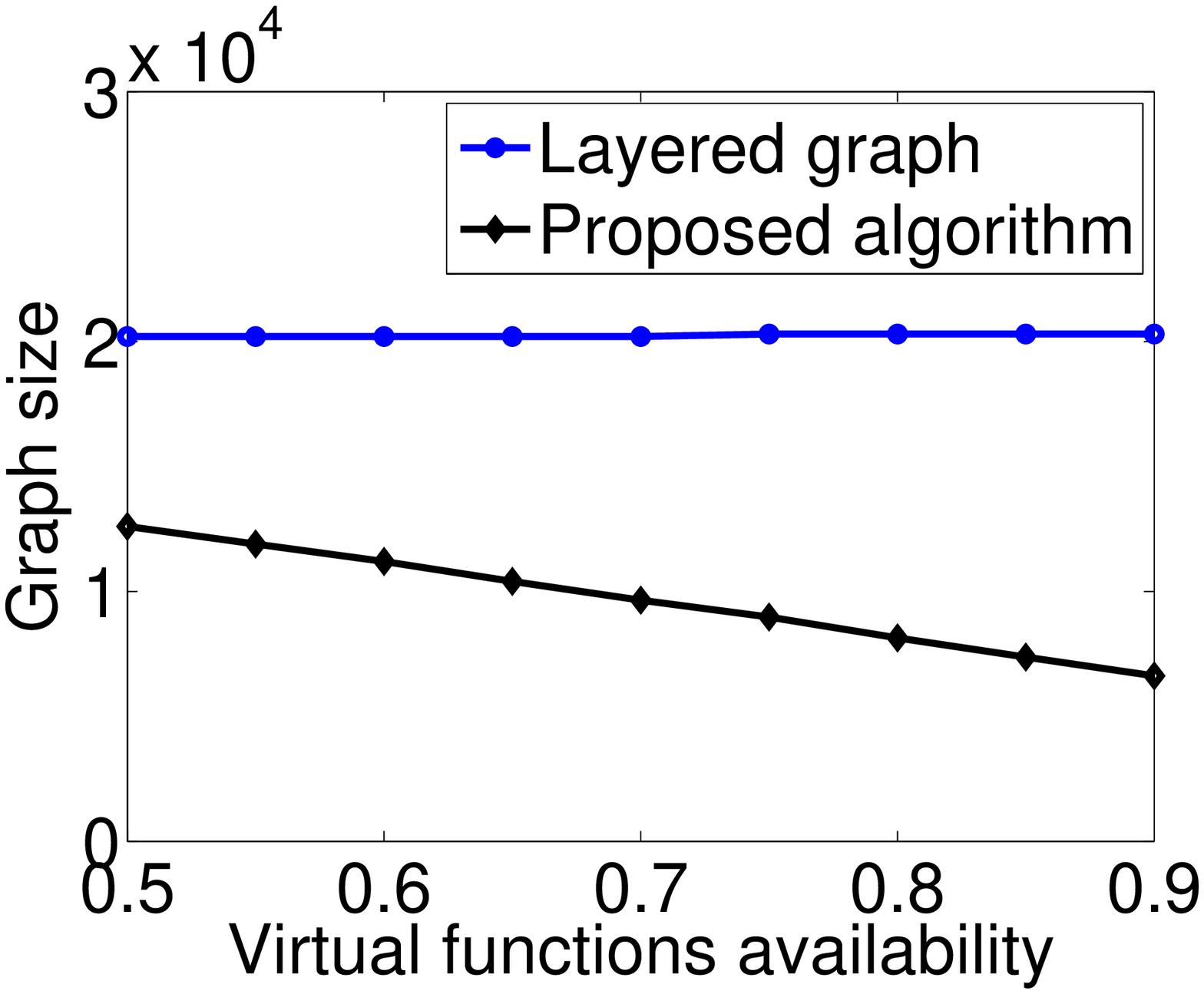}
		\caption{Graph size with different functions availability probability.}
		\label{fig:our_layered2}
	\end{subfigure}
	\caption{Graph size of our approach compared to the layered approach.}
	\label{fig:our_layered}
\end{figure}

In this subsection, we evaluate the proposed SFC-constrained shortest path algorithm, and compare its performance with the layered graph that was proposed in \cite{Dwaraki2016, Cao2014}. The layered graph is constructed by replicating the original graph $r+1$ times, where each replication is a layer. Each layer $i \leq r$ is connected to layer $i+1$ by connecting the nodes that host the $i$-th network function in layer $i$ to the same set of nodes in layer $i+1$. Then, a shortest path is computed from the source in layer one to the destination in layer $r+1$. We consider an SFC represented as $(s, \phi_1, \phi_2, \phi_3, d)$, and each function, $\phi_i$, is available at each node with a probability of $0.5$. We compare the results in term of the constructed graph size, which is the number of nodes plus the number of edges. We should note that the smaller the size of the constructed graph, the lower the complexity of any shortest path algorithm. We repeat each experiment for 10 times, and report the average result. In Fig. \ref{fig:our_layered}(\subref{fig:our_layered1}), we can see that the size of the layered graph is larger than the graph constructed by our approach. Moreover, as the probability of availability of functions in each node increases, our approach has a very small graph size compared to the layered graph, as shown in Fig. \ref{fig:our_layered}(\subref{fig:our_layered2}).

\begin{figure}
	\begin{subfigure}{0.56\linewidth}
		\centering	
		\resizebox{\linewidth}{!}{
			\begin{tikzpicture}[transform shape]
			\node[vertex, label=above:\textcolor{red}{$s_1$}](v_1) at (0, 0) {$ v_1 $};
			\node[vertex, label={[xshift=-1.5cm, yshift=0.6cm]0:{{$\{\phi_1\}$}}}] (v_2) at (2, 1) {$ v_2 $};
			\node[vertex, label=below:{$\{\phi_1, \phi_3\}$}](v_3) at (2, -1){$ v_3 $};
			\node[vertex, label={[xshift=-0.5cm, yshift=0.6cm]0:{{$\{\phi_2, \phi_3\}$}}}] (v_4) at (5, 1) {$ v_4 $};
			\node[vertex, label=below:{$\{\phi_2\}$}, label=right:\textcolor{red}{$d_2$}](v_5) at (5, -1) {$ v_5 $};
			\node[vertex, label=above:\textcolor{red}{$d_1$}](v_6) at (7, 0) {$ v_6 $};
			\node[vertex , label=left:\textcolor{red}{$s_2$}](v_7) at (3, 2.7) {$  v_7 $};
			
			\begin{scope}[every path/.style={->}, every node/.style={inner sep=1pt}]
			
			\draw (v_1) edge [  anchor= south] node {} (v_2);
			
			\draw (v_1) -- node [anchor=south] {} (v_3);
			
			\draw (v_2) -- node [anchor= south] {} (v_4);
			\draw (v_3) -- node [anchor= south] {} (v_5);
			
			\draw (v_4) -- node [anchor= south] {} (v_6);
			\draw (v_5) -- node [anchor= south] {} (v_6);
			\draw (v_7) -- node [anchor= south] {} (v_2);
			\draw (v_7) -- node [anchor= south] {} (v_4);
			\draw (v_3) -- node [anchor= south] {} (v_4);
			
			\draw (v_2) -- node [anchor= south] {} (v_3);
			\draw (v_4) -- node [anchor= south] {} (v_5);
			\end{scope} 
			\end{tikzpicture}
		}
		\caption{Graph representation of the network used for estimating the average queue size for flows from $v_1$ to $v_6$, and from $v_7$ to $v_5$.}
		\label{fig:analysis}
	\end{subfigure}
	\begin{subfigure}{0.42\linewidth}
		\centering
		\includegraphics[width=1\linewidth]{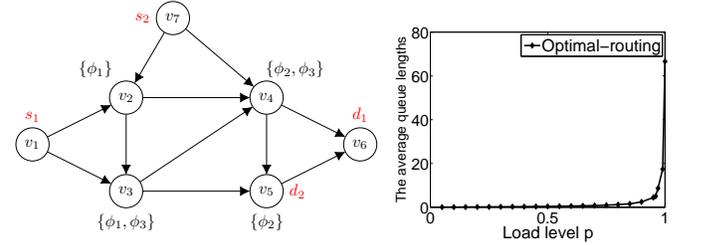}
		\caption{The average queue lengths with different arrival intensities.}
		\label{fig:Result1}
	\end{subfigure}
	\caption{Throughput-optimal routing with SFC constraints.}
	\label{fig:Result1_main}
\end{figure}

Moreover, we integrate our SFC-constrained shortest path with the throughput-optimal routing proposed in \cite{OptimalControl} to have an SFC-constrained throughput-optimal routing algorithm. We conduct simulations for the network shown in Fig. \ref{fig:Result1_main}(\subref{fig:analysis}), where multiple instances of different network functions are available at some vertices. 
We have two flows: flow $f_1$ from $ v_1 $ to $ v_6 $ with an $\text{SFC} =(v_1, \phi_1, \phi_2, v_6)$ and flow $f_2$ from $v_7$ to $ v_5 $ with an $\text{SFC} =(v_7, \phi_1, \phi_3, v_5)$. We assume a unit capacity for each link, i.e., one packet can be sent over each link at each time slot. It can be verified that the max flow rate that can be supported is 2 and 1 for flows $f_1$ and $f_2$, respectively, which is a point at the boundary of the optimal throughput region. We run experiments with Poisson arrivals for flows $f_1$ and $f_2$ with rates $\lambda_1 = 2 p$ and $\lambda_2 = p$, respectively, where $0 \leq p \leq 1$. We run experiments for $10^5$ slots, the first $10^4$ slots are excluded to consider the average queue lengths in the steady state. From the result in Fig. \ref{fig:Result1_main}(\subref{fig:Result1}), we can see that as long as the arrival  rate vector is strictly within the capacity region (i.e., $p<1$), the average total queue length is kept finite under the SFC-constrained throughput optimal routing.

\subsection{VNFs Placement Results}

	\begin{figure}
		\begin{subfigure}{0.56\linewidth}
			\centering	
			\resizebox{\linewidth}{!}{
		\begin{tikzpicture}[transform shape]
		\node[vertex, label=above:\textcolor{red}{source}](v_1) at (0, 0) {$ v_1 $};
		\node[vertex] (v_2) at (2, 1.5) {$ v_2 $};
		\node[vertex](v_3) at (2, 0){$ v_3 $};
		\node[vertex] (v_4) at (2, -1.5) {$ v_4 $};
		\node[vertex](v_5) at (4, 1.5) {$ v_5 $};
		\node[vertex](v_6) at (4, 0) {$ v_6 $};
		\node[vertex](v_7) at (4, -1.5) {$ v_7 $};
		\node[vertex , label=above:\textcolor{red}{destination}](v_8) at (6, 0) {$  v_8 $};
		
		\begin{scope}[every path/.style={->}, every node/.style={inner sep=1pt}]
		
		\draw (v_1) edge [  anchor= south] node {2} (v_2);
		
		\draw (v_1) -- node [anchor=south] {3} (v_3);
		
		\draw (v_1) -- node [anchor= south] {3} (v_4);
		\draw (v_2) -- node [anchor= south] {5} (v_5);
		
		\draw (v_3) -- node [anchor= south] {5} (v_6);
		\draw (v_4) -- node [anchor= south] {3} (v_7);
		\draw (v_4) -- node [anchor= west] {2} (v_3);
		\draw (v_5) -- node [anchor= west] {6} (v_6);
		\draw (v_5) -- node [anchor= south] {3} (v_8);
		\draw (v_6) -- node [anchor= south] {5} (v_8);
		\draw (v_6) -- node [anchor= south] {} (v_5);
		\draw (v_7) -- node [anchor= west] {2} (v_6);
		
		\draw (v_7) -- node [anchor= south] {4} (v_8);

		\end{scope} 
		\end{tikzpicture}
	}
		\caption{A network to study the VNFs placement to achieve the original maximum flow.}
		\label{fig:VNF_placement}
\end{subfigure}
\begin{subfigure}{0.42\linewidth}
	\centering
	\includegraphics[width=1\linewidth]{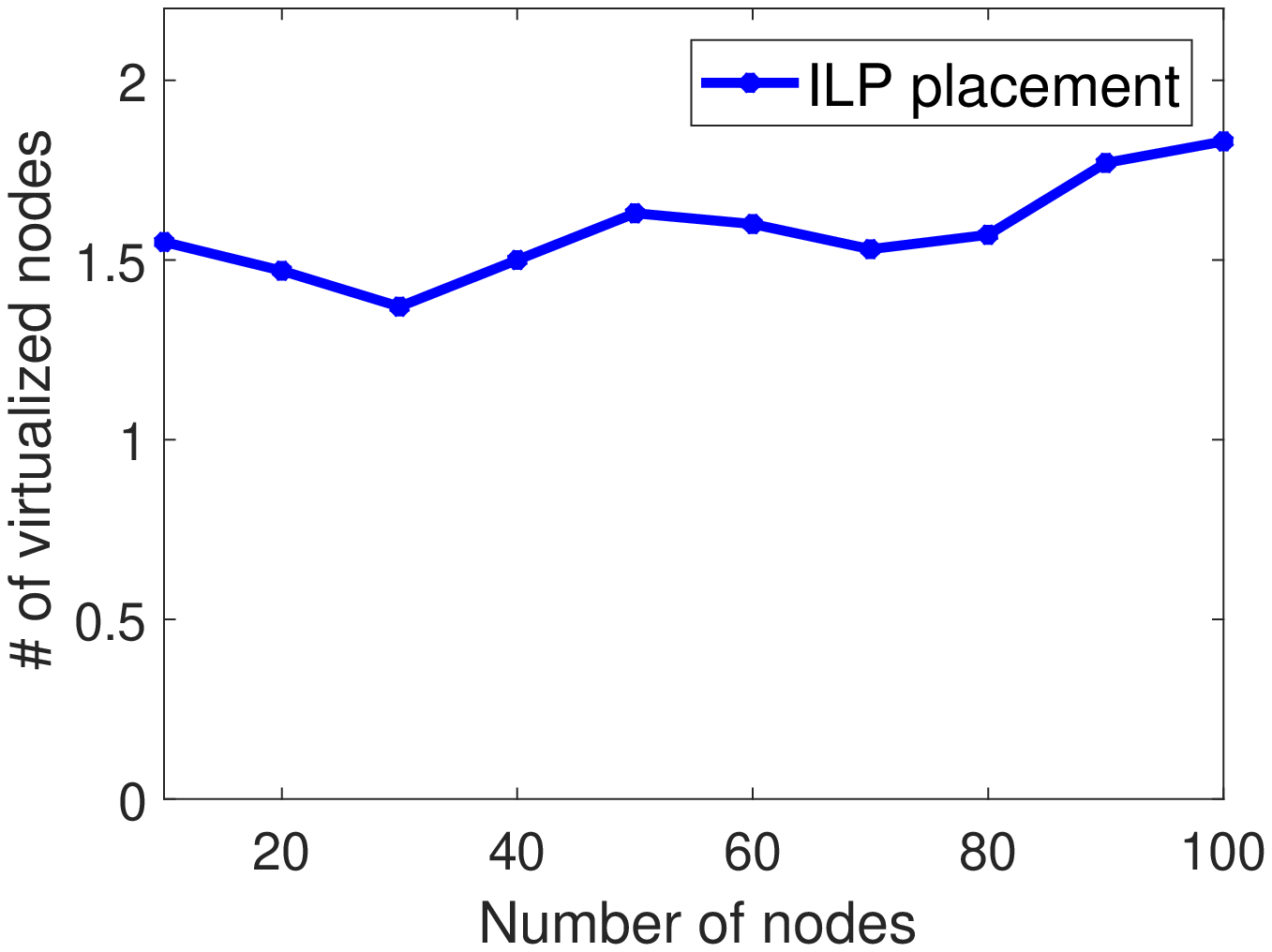}
	\caption{VNFs placement based on the ILP solution.}
	\label{fig:ILP_placement}
\end{subfigure}
\caption{VNFs placement result.}
\label{fig:VNF_placement_main}
\end{figure}

In this subsection, we show some simulation results for the ILP placement. We start by solving the placement for the network shown in Fig. \ref{fig:VNF_placement_main}(\subref{fig:VNF_placement}). We can see that the maximum flow from node $v_1$ to node $v_8$ is 8, which can be through different set of paths. But, we want to minimize the number of nodes to place the required function on them and still achieve the maximum flow of 8. It can be verified that a placement at node $v_6$ only is sufficient, i.e., the maximum flow from node $v_1$ to node $v_8$ through node $v_6$ is 8. Now, we provide a result for more general graphs. We consider random graphs with a different number of nodes from 10 to 100. Each node has an average degree of $|\mathcal{V}|/3$, and the capacity of edges are uniformally distributed between 2 and 10. We repeat each experiment for 20 times and report the average result in Fig. \ref{fig:VNF_placement_main}(\subref{fig:ILP_placement}). Based on the result, we can see that for different graph sizes, we need a small number of nodes to achieve the original maximum flow. That would be an incentive for network operators to introduce VNFs in their networks with low cost without impacting the capacity. The running time of the ILP solution ranges from a few seconds for small graphs to few minutes for large graphs.
\section{Conclusion}
\label{sec:conclusion}
In this paper, we have investigated several issues that arise from Service Functions Chains (SFC) constraints in networks with combined PNFs and VNFs. We solved the SFC-constrained shortest path problem by a transformation of the network graph to a new graph, which ensures an efficient computation of an SFC-constrained shortest path. We also investigated the problem of an SFC-constrained maximum flow problem. We formulated the problem as a fractional multicommodity maximum flow problem and presented a combinatorial solution for a special case. Lastly, we considered VNFs placement from a maximum flow perspective. Our objective is to achieve the original maximum flow in the network while satisfying a given SFC constraint. We showed that the problem is NP-hard. Then, we provided an equivalent ILP formulation, which can be solved in a few minutes for large instances. An interesting problem for future work is to develop a combinatorial algorithm for computing the SFC constrained maximum flow in general. It is also important to find an approximation algorithm for the VNFs placement problem we formulated in this paper.

\bibliographystyle{abbrv}
\bibliography{refrences}

\end{document}